\title{Quantum radial basis function method for the Poisson equation}
\author{Lingxia Cui\footnotemark[1]\
     \and Zongmin Wu\footnotemark[2]
    \and Hua Xiang\footnotemark[1] \footnotemark[3] \footnotemark[4]}
\date{}
\begin{document}
    \maketitle
    \renewcommand{\thefootnote}{\fnsymbol{footnote}}
    \footnotetext[1]{School of Mathematics and Statistics, Wuhan University, Wuhan 430072, China.}
    \footnotetext[2]{School of Mathematics, Fudan University, Shanghai 200433, China.}
    \footnotetext[3]{Hubei Key Laboratory of Computational Science, Wuhan University, Wuhan 430072, China.}
    \footnotetext[4]{E-mail address: {\tt hxiang@whu.edu.cn}.}

\begin{abstract}
The radial basis function (RBF) method is used for the numerical solution of the Poisson problem in high dimension. The approximate solution can be found  by solving a large system of linear equations. Here we investigate the extent to which the RBF method can be accelerated using an efficient quantum algorithm for linear equations. We compare the theoretical performance of our quantum algorithm with that of a standard classical algorithm, the conjugate gradient method. We find that the quantum algorithm can achieve a polynomial speedup.
\end{abstract}

\section{Introduction}
Partial differential equations (PDEs) are ubiquitous in science and engineering. The development of efficient and accurate numerical algorithms is one of the most important field in computational mathematics discipline and engineering analysis community.
Traditional numerical methods for solving PDEs are often based on mesh discretization, such as finite difference, finite elements or finite volumes. Finite difference provides an efficient numerical approach that permits large-scale simulations with high-order accuracy in many areas, but requires structured meshes. Finite element methods are flexible to handle complicated geometry, but both coding and mesh generation become increasingly difficult when the number of space dimensions increases. Finite volume methods involve discretizing space into many grid cells, also having the mesh generation problem. An alternative to mesh-based methods is spectral methods, which usually provide high accuracy, but have severe regularity restrictions on the geometry.

When solving PDEs, it is very desirable to use entirely meshfree data distributions just as needed to fit boundaries and to satisfy spatially variable resolution requirements, but without having to form any local meshes. In many applications, it is also desirable to deal with high-dimensional problems.
Being entirely meshfree and insensitive to space dimensions, RBF discretizations are particularly easy to implement in complicated geometry domain and can be easily adaptable to problems involving more than two independent variables.

During the past decades, RBF techniques as a powerful tool have gained popular uses to solving PDEs \cite{fasshauer1999solving, franke1998convergence, franke1998solving, wendland2007stability, fornberg2015solving, wong2002compactly, wu2003convergence}. While Galerkin methods have been suggested \cite{wendland1999meshless}, the predominant method for the numerical solution of PDEs using RBFs is based upon collocation,
due to the obvious advantage that no numerical integration is required to set up the discretized linear system.
The collocation method employs linear combinations of RBFs or the derivative information to satisfy collocation conditions for discretizing PDEs. This further leads to a system of linear equations of dimension equal to the number of collocation points. Through solving such a linear system, an explicit formula of the numerical solution of PDEs is essentially attained.
Since the computational cost and ill-conditioning of using global RBFs are major factors to be considered, when dealing with large-scale problems, a local scheme with compactly supported RBFs (CSRBFs) can be more useful.
The advantages are a sparse collocation matrix, and the possibility of a fast evaluation of the numerical solution.
Though equipped with a sparse collocation matrix by CSRBFs, any classical algorithm requires time that scales at least as the number of collocation points for solving the resultant linear system. For problems with a large number of points in high space dimensions, the computational cost can be an obvious concern.

Recently, quantum algorithms have received much attention because of the
enhanced computational complexity and potential applications in many different areas \cite{shor1994algorithms,grover1997quantum,harrow2009quantum,wiebe2012quantum,wang2019quantum,xiang2022quantum,rebentrost2014quantum, biamonte2017quantum, berry2014high, berry2017quantum}. One area where quantum algorithms could offer significant speedups over the classical counterparts is the numerical solution of PDEs. Indeed, the quantum linear system algorithm (QLSA) forms a foundation and lies at the heart of many quantum algorithms \cite{cao2013quantum, montanaro2016quantum, childs2021high, chen2022quantum}. The reason is that many classical approaches (e.g. the aforementioned finite difference, finite element, finite volume, spectral or RBF methods) for solving PDEs are based on discretizing the PDE and reducing the problem to a linear system. Beginning with HHL algorithm \cite{harrow2009quantum}, various QLSAs \cite{childs2017quantum, wossnig2018quantum, subacsi2019quantum, lin2020optimal} have been proposed and applied to the numerical solution of PDEs. We just list a few examples. Combining the finite difference method with the HHL algorithm, the authors in \cite{cao2013quantum} gave a quantum algorithm for the Poisson equation with improved performance. By applying the LCU-based QLSA \cite{childs2017quantum} to the finite element method, the authors in \cite{montanaro2016quantum} presented a quantum algorithm for the Poisson equation which achieves a polynomial speedup over the classical ones. The authors in \cite{chen2022quantum} proposed a quantum approach to accelerate the finite volume method for steady computational fluid dynamics problems built on the LCU-based quantum linear solver.
Combining the spectral method with the LCU-based QLSA, the authors in \cite{childs2021high} developed a high-precision quantum algorithm for second-order elliptic equations with an exponential speedup over the classical counterparts. Besides, allowing for easy to implement on noisy intermediate-scale quantum devices, a variational quantum algorithm to solve the Poisson equation was proposed in \cite{liu2021variational}.

Even though the wide investigation of quantum algorithms for the numerical solution of PDEs, the quantum RBF approaches to approximating the solution of PDEs have not been considered to the knowledge of the authors. It may be very significant due to the geometric flexibility and extensive applications of  RBF methods to solving PDEs especially for high-dimensional problems.
Here we investigate a fundamental problem, the Poisson equation that arises in numerous areas of science and engineering, to enable us to compare the complexities of classical and quantum algorithms. By means of the technique of QLSA based on filtering \cite{lin2020optimal} and block encodings \cite{gilyen2018quantum},
we develop a quantum version of the collocation method by CSRBFs to deal with the $d$-dimensional Poisson problem with Dirichlet boundary conditions.
We further implement an overall complexity analysis.
To achieve accuracy $\epsilon$, any classical numerical method requires cost bounded by a function that grows as $\epsilon^{-O(d)}$ \cite{werschulz1991computational}. However, the scaling with $\epsilon$ of our quantum algorithm's runtime does not grow exponentially in the space dimension $d$. The quantum algorithm can achieve a polynomial speedup for high-dimensional problems compared to the classical counterparts.

Certain items of notations are used throughout the paper. We employ $||\cdot||$ to denote vector or matrix $2$-norm for notational convenience. We need to deal with continuous functions, their discretized approximations as vectors, and their quantum states. Italics denote functions, boldface denotes vectors, and quantum states (usually normalized) are represented as kets. We use $s$ to describe the sparsity of a matrix if there are no more than $s$ nonzero entries in any row or column of the matrix.
Sometimes we will use the same notation to describe a kind of object if it is obvious from the context. For example, we may use $C$ to represent the constant without confusion. The $\widetilde{O}$ notation is used to suppress more slowly growing factors.

The paper is organized as follows. Section 2 introduces technical details about CSRBFs and collocation methods, and formally states the problem we solve. In Section 3 we discuss the quantum RBF algorithm for the Poisson problem and perform an overall error analysis. Section 4 concludes with a brief discussion of the results and lists the possible future research.

\section{Collocation method based on CSRBFs}
We focus on the collocation method by CSRBFs to solve the Poisson problem. In this section, some related technical details are introduced, and the problem that we will solve by quantum algorithms is formally stated.
For a radial function $\Phi(\bm x)=\phi(||\bm x||)$ for $\bm x\in\mathbb{R}^d$, the capital letter $\Phi$ denotes the multivariate function and the small letter $\phi$ denotes the univariate function. For a set of distinct collocation points
$\mathcal{X}=\{\bm x_1,\dots,\bm x_N\}$ in a bounded domain $\Omega\subseteq\mathbb{R}^d$, the fill distance, also called the mesh norm, is defined as
\begin{equation}\label{eq.2.1}
    h=h_{\mathcal{X},\Omega}=\sup\limits_{\bm x\in\Omega}\min\limits_{1\leq j\leq N}||\bm x-\bm x_j||,
\end{equation}
which measures the radius of the largest data-free ball contained in $\Omega$.
The separation distance is defined by
\begin{equation}\label{eq.2.2}
    q=q_\mathcal{X}=\frac{1}{2}\min\limits_{i\neq j}||\bm x_i-\bm x_j||,
\end{equation}
which is half of the smallest distance between two points in $\mathcal X$.
The data set $\mathcal{X}$ is said to be quasi-uniform (such as Halton points \cite{halton1960on}) with respect to a constant $c_{qu}>0$ if
\begin{equation}\label{eq.2.3}
    q\leq h\leq c_{qu}q,
\end{equation}
i.e., the separation distance and the fill distance are of comparable size.

\subsection{CSRBFs}
A family of CSRBFs was first introduced by Wu in the mid 1990s \cite{wu1995compactly}. Starting with a cutoff polynomial, then using convolution and a differential operator, Wu constructed a series of positive definite radial functions with compact support, for any given dimension and prescribed smoothness. Later, Wendland expanded Wu's results, and constructed a family of positive definite CSRBFs of minimal degree for given smoothness and space dimension \cite{wendland1995piecewise}, which are usually called Wendland's functions. There are also other ways to construct positive definite CSRBFs \cite{gneiting2002compactly}. In this paper, we work with Wendland's functions for simplicity.

Wendland's functions $\Phi_{d,k}=\phi_{d,k}(||\cdot||)\in C^{2k}(\mathbb{R}^d)$ take the form
\begin{equation}\label{eq.2.1.4}
\phi_{d,k}(r) = (1-r)_+^\ell p(r)
\end{equation}
with the following conditions
\begin{equation*}
    (1-r)_+^\ell= \left\{
    \begin{array}{ll}
        (1-r)^\ell, & 0\leq r\leq1,\\
        0, & r>1,
    \end{array}
    \right.
\end{equation*}
where $r=||\bm x-\bm x_j||$ is the Euclidean distance between the input variable $\bm x$ and the center $\bm x_j$, $\ell=\lfloor d/2\rfloor+k+1$, and $p(r)$ is a polynomial \cite{wendland2004scattered}. Sometimes, the subscript $d,k$ is omitted without confusion. We assume the collocation points and the centers coincide without loss of generality.

These radial functions $\phi_{d,k}$ are positive definite on $\mathbb{R}^d$ with degree of smoothness $2k$, and have support radius equal to 1.  Table \ref{table.1} lists some explicit formulas of Wendland's functions, where $\doteq$ denotes equality up to a multiplicative positive constant.
Scaling of the radial functions by replacing $r$ with $r/\delta$ for $\delta>0$, i.e., $\phi_{d,k}(r/\delta)$, allows any desired support radius $\delta$. In general, the smaller the value of $\delta$ is, the sparser the associated matrix becomes; however this also results in lower accuracy. That is, a tradeoff principle exists between the computational efficiency and the accuracy. The choice of the optimal support radius $\delta$ of a CSRBF is a delicate question.

\begin{table}[H]
    \centering
    \caption{\label{table.1} Explicit formulas of Wendland's functions $\phi_{d,k}$ for $k=0,1,2,3$. $\ell=\lfloor d/2\rfloor+k+1$.}
    \renewcommand\arraystretch{1.3}
    \begin{tabular}{l c}
        \toprule
        \textbf{Function} & \textbf{Smoothness}\\
        \midrule
        $\phi_{d,0}(r)=(1-r)_+^{\lfloor d/2 \rfloor + 1}$ & $C^0$ \\
        $\phi_{d,1}(r)\doteq(1-r)_+^{\ell + 1}\left[(\ell+1)r+1\right]$ & $C^2$ \\
        $\phi_{d,2}(r)\doteq(1-r)_+^{\ell + 2}\left[ (\ell^2+4\ell+3)r^2 +(3\ell+6)r +3 \right]$ & $C^4$ \\
        $\phi_{d,3}(r)\doteq(1-r)_+^{\ell + 3}\left[ (\ell^3+9\ell^2+23\ell+15)r^3 +  (6\ell^2+36\ell+45)r^2 +(15\ell+45)r +15 \right]$ & $C^6$\\
        \bottomrule
    \end{tabular}
\end{table}

Recall the definition of a positive definite radial function and its elementary properties.

\newtheorem{definition}{Definition}
\begin{definition}{\rm \cite{fasshauer2007meshfree}}
A continuous function $\Phi: \mathbb{R}^d\rightarrow\mathbb{C}$ is called positive definite if, for all $N\in\mathbb{N}$, all sets of pairwise distinct points $\mathcal{X}=\{\bm x_1,\dots,\bm x_N\}\subseteq\mathbb{R}^d$, and all $\bm \xi\in\mathbb{C}^N$, the quadratic form
\begin{equation*}
    \sum\limits_{i=1}^N\sum\limits_{j=1}^N\xi_i\overline{\xi_j}\Phi(\bm x_i-\bm x_j)
\end{equation*}
is positive for all $\bm\xi\in\mathbb{C}^N\backslash \bm0$. We call a univariate function $\phi: [0,\infty)\rightarrow\mathbb{R}$ positive definite on $\mathbb{R}^d$ if the corresponding multivariate function $\Phi(\bm x)=\phi(||\bm x||)$, $\bm x\in\mathbb{R}^d$, is positive definite.
\end{definition}

\newtheorem{theorem}{Theorem}
\begin{theorem} {\rm \cite{fasshauer2007meshfree}} \label{th.01}
{\rm (I)} Suppose $\Phi$ is a positive definite radial function on $\mathbb{R}^d$. Then
$\Phi(\bm 0)\geq0$ and $|\Phi(\bm x)|\leq\Phi(\bm 0)$ for all $\bm x\in\mathbb{R}^d$.\\
{\rm (II)} Let $\Phi$ be a continuous function in $L_1(\mathbb{R}^d)$. $\Phi$ is positive definite if and only if $\Phi$ is bounded, and its Fourier transform is non-negative and not identically equal to zero.
\end{theorem}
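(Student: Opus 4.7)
My plan is to handle the two statements separately: (I) follows directly from the definition applied at very small sample sizes, while (II) is the classical Bochner theorem in disguise and requires Fourier analysis.

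For (I), I would apply the positive-definiteness inequality at the smallest sample sizes. Taking $N=1$ with any nonzero $\xi_1$ gives $|\xi_1|^2 \Phi(\bm 0) > 0$, which already yields $\Phi(\bm 0) > 0$, strictly stronger than the stated $\Phi(\bm 0) \geq 0$. Taking $N=2$ with distinct points $\bm x_1 = \bm 0$ and $\bm x_2 = \bm x$ produces the $2 \times 2$ Gram-type matrix $M = \begin{pmatrix} \Phi(\bm 0) & \Phi(-\bm x) \\ \Phi(\bm x) & \Phi(\bm 0) \end{pmatrix}$. Positivity of the associated quadratic form on $\mathbb{C}^2 \setminus \{\bm 0\}$ forces $M$ to be Hermitian positive definite, which simultaneously yields $\Phi(-\bm x) = \overline{\Phi(\bm x)}$ and the determinant bound $\Phi(\bm 0)^2 - |\Phi(\bm x)|^2 \geq 0$, equivalent to $|\Phi(\bm x)| \leq \Phi(\bm 0)$.

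For (II), both directions move between the discrete quadratic-form definition and its Fourier-analytic reformulation. For necessity, I would first lift the sum in the definition to a double integral against a continuous test function by a Riemann-sum/density argument: for any Schwartz $f$,
\[
\iint_{\mathbb{R}^d \times \mathbb{R}^d} \Phi(\bm x - \bm y) f(\bm x) \overline{f(\bm y)}\, d\bm x\, d\bm y \geq 0.
\]
Because $\Phi \in L_1(\mathbb{R}^d)$, its Fourier transform $\widehat\Phi$ is continuous and bounded, and Plancherel's identity rewrites the integral as $\int |\widehat f(\bm \omega)|^2 \widehat\Phi(\bm \omega)\, d\bm \omega \geq 0$. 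Choosing $f$ so that $|\widehat f|^2$ concentrates near any prescribed $\bm \omega_0$ forces $\widehat\Phi(\bm \omega_0) \geq 0$ pointwise, and $\widehat\Phi$ cannot be identically zero because otherwise $\Phi \equiv 0$. Boundedness of $\Phi$ is inherited from part (I).

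For sufficiency, assume $\widehat\Phi \geq 0$, $\widehat\Phi \not\equiv 0$, and $\Phi$ bounded. After a Gaussian mollification to handle the integrability of $\widehat\Phi$, Fourier inversion yields
\[
\sum_{i,j=1}^N \xi_i \overline{\xi_j} \Phi(\bm x_i - \bm x_j) = \frac{1}{(2\pi)^{d/2}} \int_{\mathbb{R}^d} \widehat\Phi(\bm \omega) \Bigl| \sum_{i=1}^N \xi_i e^{i \bm x_i \cdot \bm \omega} \Bigr|^2 d\bm \omega,
\]
which is manifestly non-negative. Strict positivity for $\bm \xi \neq \bm 0$ follows because the exponential polynomial is a nonzero real-analytic function (by distinctness of the $\bm x_i$), hence nonzero on a set of full measure, while $\widehat\Phi$ is continuous, non-negative, and not identically zero, so strictly positive on some open set. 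The main obstacle will be rigorously justifying Fourier inversion and the sum-integral interchange under only $\Phi \in L_1$ rather than the more convenient $\widehat\Phi \in L_1$; I would resolve this by replacing $\widehat\Phi(\bm \omega)$ with $\widehat\Phi(\bm \omega) e^{-\epsilon |\bm \omega|^2}$, establishing the inequality for each $\epsilon > 0$, and passing to the limit $\epsilon \to 0^+$ via dominated convergence.
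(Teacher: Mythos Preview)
The paper does not supply its own proof of this theorem: it is quoted verbatim from \cite{fasshauer2007meshfree} and used as background, so there is no in-paper argument to compare against. Your outline is the standard textbook route (essentially what one finds in Fasshauer or Wendland), and it is correct.

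A couple of minor remarks on execution. In (I), your $N=2$ argument tacitly assumes $\bm x\neq\bm 0$ so that the two nodes are distinct; the case $\bm x=\bm 0$ is trivial since you already have $\Phi(\bm 0)>0$. Also, the step ``positivity of the quadratic form forces $M$ Hermitian'' deserves one line: realness of $\bm\xi^{*}M\bm\xi$ for all $\bm\xi$ gives $M=M^{*}$, after which positive definiteness and the determinant bound follow. In (II), the passage from the discrete sum to the double integral only yields $\geq 0$ in the limit, not strict positivity, but that is all you need there; and your Gaussian-damping device for the inversion step is exactly the right fix when $\widehat\Phi\notin L_1$. With those small clarifications the argument is complete.
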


Theorem \ref{th.01} shows that the positive definiteness of a radial function $\Phi$ on $\mathbb{R}^d$ relies on its Fourier transform
\begin{equation}\label{eq.2.1.5}
    \widehat{\Phi}(\bm\omega)=\frac{1}{(2\pi)^{d/2}}\int_{\mathbb{R}^d}\Phi(\bm x)e^{-{\mathbf i} \bm x^T \bm\omega}d\bm x,\quad \bm\omega\in\mathbb{R}^d,
\end{equation}
where ${\mathbf i}$ denotes the imaginary unit.
Every positive definite radial function $\Phi$ can indeed generate a reproducing kernel Hilbert space, also called its native space $\mathcal N_\Phi(\mathbb{R}^d)$. Suppose that the Fourier transform of $\Phi$ satisfies the algebraic decay condition
\begin{equation}\label{eq.2.1.6}
    c_1(1+||\bm \omega||^2)^{-\tau}\leq\widehat{\Phi}(\bm\omega)\leq c_2(1+||\bm \omega||^2)^{-\tau},\quad \bm\omega\in\mathbb{R}^d
\end{equation}
with $\tau>d/2$ and two fixed constants $0< c_1\leq c_2$. Then the native space $\mathcal N_\Phi(\mathbb{R}^d)$ is norm equivalent to the Sobolev space $H^\tau(\mathbb{R}^d)$ \cite{wendland2004scattered}. If $\tau>k+d/2$ then Fourier inversion formula guarantees that $\Phi\in C^{2k}(\mathbb{R}^d)$ and Sobolev's embedding theorem shows that $H^\tau(\mathbb{R}^d)\subseteq C^{k}(\mathbb{R}^d)$. Examples of such functions are Wendland's functions $\Phi_{d,k}$ in (\ref{eq.2.1.4}), the native space of which is a classical Sobolev spaces, i.e. $\mathcal N_{\Phi_{d,k}}(\mathbb{R}^d)=H^{\tau} (\mathbb{R}^d)$ with $\tau=d/2+k+1/2$.

We will scale the radial function in the following way. Let $\Phi_\delta$ be defined by
\begin{equation}\label{eq.2.1.7}
\Phi_\delta(\bm x) := \delta^{-d}\Phi(\bm x/\delta)
\end{equation}
such that it has a Fourier transform $\widehat{\Phi_\delta}(\bm\omega)=\widehat{\Phi}(\delta\bm\omega)$. Note that $\Phi$ has support radius $1$, and $\Phi_\delta$ has support radius $\delta$.
Using the scaled basis function $\Phi_\delta$ yields the following norm equivalence.
\newtheorem{lemma}{Lemma}
\begin{lemma}{\rm \cite{wendland2010multiscale}} \label{lemma.01}
For every $\delta\in(0,1]$ we have $\mathcal{N}_{\Phi_\delta}(\mathbb{R}^d)=H^{\tau} (\mathbb{R}^d)$ and for every $y\in H^\tau(\mathbb{R}^d)$, we have the norm equivalence
\begin{equation*}
    c_1^{1/2}||y||_{\Phi_\delta}\leq||y||_{H^\tau(\mathbb{R}^d)}\leq c_2^{1/2}\delta^{-\tau}||y||_{\Phi_\delta},
\end{equation*}
with $c_1, c_2>0$ constants in (\ref{eq.2.1.6}) and $||\cdot||_{\Phi_\delta}$ the norm of the native space $\mathcal{N}_{\Phi_\delta}(\mathbb{R}^d)$. The condition $\delta\leq1$ can be relaxed to $\delta\leq\delta_0$ for any $\delta_0>0$ at the price of having constants depending on $\delta_0$.
\end{lemma}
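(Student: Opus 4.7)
The plan is to transfer everything to Fourier space, since both the native-space norm $\|\cdot\|_{\Phi_\delta}$ and the Sobolev norm $\|\cdot\|_{H^\tau(\mathbb{R}^d)}$ admit explicit weighted $L^2$ representations. Concretely, $\|y\|_{\Phi_\delta}^2$ equals (up to a fixed constant) $\int_{\mathbb{R}^d}|\widehat y(\bm\omega)|^2 / \widehat{\Phi_\delta}(\bm\omega)\, d\bm\omega$, while $\|y\|_{H^\tau(\mathbb{R}^d)}^2$ is equivalent to $\int_{\mathbb{R}^d}|\widehat y(\bm\omega)|^2(1+||\bm\omega||^2)^{\tau}\, d\bm\omega$. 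Thus the lemma reduces to a pointwise two-sided comparison between the weight $1/\widehat{\Phi_\delta}(\bm\omega)$ and the Sobolev weight $(1+||\bm\omega||^2)^{\tau}$, after which everything becomes a matter of integrating against $|\widehat y(\bm\omega)|^2$.

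First I would exploit the scaling identity $\widehat{\Phi_\delta}(\bm\omega)=\widehat{\Phi}(\delta\bm\omega)$ recorded immediately after (\ref{eq.2.1.7}). Substituting $\delta\bm\omega$ for $\bm\omega$ in the algebraic decay hypothesis (\ref{eq.2.1.6}) yields the scaled estimate
\[
c_1(1+\delta^2||\bm\omega||^2)^{-\tau}\;\leq\;\widehat{\Phi_\delta}(\bm\omega)\;\leq\;c_2(1+\delta^2||\bm\omega||^2)^{-\tau}.
\]
Next, the elementary chain
\[
1+\delta^2 r^2\;\leq\;1+r^2\;\leq\;\delta^{-2}(1+\delta^2 r^2),\qquad r\geq 0,
\]
is valid for all $\delta\in(0,1]$; the right-hand inequality is precisely where the hypothesis $\delta\leq 1$ enters. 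Raising these to the $\tau$-th power relates $(1+||\bm\omega||^2)^\tau$ to $(1+\delta^2||\bm\omega||^2)^\tau$ at every frequency.

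Combining the two bounds, multiplying by $|\widehat y(\bm\omega)|^2$, and integrating gives on one side $c_1\|y\|_{\Phi_\delta}^2 \leq \|y\|_{H^\tau(\mathbb{R}^d)}^2$ and on the other $\|y\|_{H^\tau(\mathbb{R}^d)}^2 \leq c_2\,\delta^{-2\tau}\|y\|_{\Phi_\delta}^2$, which on taking square roots is exactly the asserted norm equivalence. The set identity $\mathcal{N}_{\Phi_\delta}(\mathbb{R}^d)=H^\tau(\mathbb{R}^d)$ then follows from this equivalence together with the standard Fourier characterization of both spaces as completions of a common dense class (e.g.\ Schwartz functions) under equivalent norms. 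The only real point requiring care is bookkeeping: tracking which of $c_1,c_2$ appears on which side and confirming that the asymmetric $\delta^{-\tau}$ factor lands on the upper bound, reflecting the loss of control on high-frequency modes as the support radius shrinks. The relaxation to $\delta\leq \delta_0$ is immediate because only the right-hand inequality in the elementary chain is affected, and it picks up a factor depending on $\delta_0$ but no other change.
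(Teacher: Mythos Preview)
Your argument is correct and is precisely the standard one: pass to Fourier side, use $\widehat{\Phi_\delta}(\bm\omega)=\widehat{\Phi}(\delta\bm\omega)$ together with the decay condition (\ref{eq.2.1.6}), and compare $(1+\delta^2||\bm\omega||^2)^\tau$ with $(1+||\bm\omega||^2)^\tau$ via the elementary inequality $1+\delta^2r^2\leq 1+r^2\leq \delta^{-2}(1+\delta^2r^2)$ for $\delta\in(0,1]$. Note, however, that the paper does not supply its own proof of this lemma at all---it is simply quoted from \cite{wendland2010multiscale}---so there is nothing in the paper to compare against; your proof is essentially the one given in that reference.
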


\subsection{Collocation method}
Collocation plays a key role in the RBF approaches to solving PDEs. There are two main approaches taken when formulating the RBF expansion for the collocation solution of PDEs. One is a nonsymmetric method suggested by Kansa \cite{kansa1990multiquadrics}. This method has the advantage that less derivatives have to be formed but has the drawback of a nonsymmetric coefficient matrix, which might even become singular for certain configurations of centers \cite{hon2001unsymmetric}. The other one is a symmetric collocation method, motivated by scattered Hermite-Birkhoff interpolation developed by Wu \cite{zongmin1992hermite}. This Hermite-based method produces a symmetric coefficient matrix that will be nonsingular as long as the radial function is chosen appropriately.

In this paper, we concentrate on the symmetric collocation method by CSRBFs to deal with a $d$-dimensional Poisson problem, defined on a bounded domain $\Omega\subseteq\mathbb{R}^d$ with Dirichlet boundary conditions as
\begin{equation}\label{eq.2.2.8}
    \begin{aligned}
    -\Delta u(\bm x) &= f(\bm x), &\bm x&\in\Omega, \\
    u(\bm x) &= g(\bm x), &\bm x&\in\partial\Omega,
    \end{aligned}
\end{equation}
where $\Delta=\sum_{j=1}^{d}\partial^2/\partial x_j^2$ is the Laplace operator with respect to $\bm x=(x_1,\dots,x_d)\in\mathbb{R}^d$.
The problem is discretized on two sets $\mathcal{X}=\mathcal{I}\cup\mathcal{B}$ of $N$ distinct collocation points, where
$\mathcal{I}=\{\bm x_1,\dots,\bm x_{N_\mathcal{I}}\}$ contains $N_\mathcal{I}$ interior points and $\mathcal{B}=\{\bm x_{N_\mathcal{I}+1},\dots,\bm x_N\}$ contains $N-N_\mathcal{I}$ boundary points.
The collocation (numerical) solution of the Poisson equation can be represented as
\begin{equation}\label{eq.2.2.9}
    \bar u(\bm x) = -\sum\limits_{j=1}^{N_\mathcal{I}}c_j\Delta\Phi_\delta(\bm x-\bm x_j) + \sum\limits_{j=N_\mathcal{I}+1}^Nc_j\Phi_\delta(\bm x-\bm x_j),
\end{equation}
where $c_j$ for $j=1,\dots,N$ are expansion coefficients, and $\Phi_\delta=\delta^{-d}\Phi_{d,k}(\cdot/\delta)$ is a scaled Wendland's function with support radius $\delta$ as defined in (\ref{eq.2.1.7}). Note that the native space generated by Wendland's function $\Phi_{d,k}$ is a Sobolev space $H^\tau(\mathbb{R}^d)$ with
\begin{equation}\label{eq.2.2.10}
\tau=d/2+k+1/2.
\end{equation}

After enforcing the collocation conditions
\begin{equation*}
    \begin{aligned}
        -\Delta \bar u(\bm x_j) & = f(\bm x_j), & \bm x_j\in\mathcal{I}, \\
        \bar u(\bm x_j) &= g(\bm x_j), & \bm x_j \in\mathcal{B},
    \end{aligned}
\end{equation*}
we end up with a linear system
\begin{equation}\label{eq.2.2.11}
A^{\diamond}\bm c^{\diamond} = \bm b^{\diamond},
\end{equation}
where $A^{\diamond}$ is the collocation matrix that is symmetric positive definite in the form
\begin{equation}\label{eq.2.2.12}
    A^{\diamond} = \left[
    \begin{array}{cc}
        A_\mathcal{II} & A_\mathcal{IB} \\
        A_\mathcal{IB}^T & A_\mathcal{BB}
    \end{array}
    \right]
\end{equation}
with the entries of each block given by
\begin{equation*}
    \begin{aligned}
        (A_\mathcal{II})_{ij} & = \Delta^2\Phi_\delta(\bm x_i-\bm x_j), & \bm x_i&,\bm x_j \in \mathcal{I}, \\
        (A_\mathcal{IB})_{ij} & = -\Delta\Phi_\delta(\bm x_i-\bm x_j), & \bm x_i&\in \mathcal{I}, \bm x_j \in \mathcal{B},\\
        (A_\mathcal{BB})_{ij} & = \Phi_\delta(\bm x_i-\bm x_j), & \bm x_i&,\bm x_j \in \mathcal{B},\\
    \end{aligned}
\end{equation*}
and
\begin{equation}\label{eq.2.2.13}
\bm c^{\diamond}=\left[c_1,\cdots,c_N\right]^T, \quad \bm b^{\diamond} =\left[f(\bm x_1),\dots,f(\bm x_{N_\mathcal{I}}),g(\bm x_{N_\mathcal{I}+1}),\dots,g(\bm x_N)\right]^T.
\end{equation}
By solving such a linear system, an explicit formula of $\bar u(\bm x)$ in (\ref{eq.2.2.9}) can be determined. As for the convergence of the numerical solution, we have the following result from \cite{farrell2013rbf}.

\begin{theorem} \label{th.02}
Assume that $\delta\in(0,1]$. Let $u\in H^\tau(\Omega)$ be the solution of
the Poisson equation (\ref{eq.2.2.8}). Let the domain $\Omega$ have a $C^{\alpha,\gamma}$-boundary for $\gamma\in[0,1)$ such that $\tau=\alpha+\gamma$ and $\alpha := \lfloor\tau\rfloor>2+d/2 $. Then the error between the solution $u$ and its collocation solution $\bar u$ in (\ref{eq.2.2.9}) can be bounded by
\begin{equation*}
    ||u-\bar u||_{L_2(\Omega)}\leq C\delta^{-\tau}h^{\tau-2}||u||_{H^\tau(\Omega)}
\end{equation*}
in the $L_2$-norm for sufficiently small $h$, where $h$ is the larger of the fill distances in the interior and on the boundary of $\Omega$.
\end{theorem}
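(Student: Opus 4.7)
The plan is to decompose the estimate into two independent ingredients: a native-space optimality statement that controls $\|u-\bar u\|_{H^\tau(\Omega)}$ by $\|u\|_{H^\tau(\Omega)}$ (with a $\delta^{-\tau}$ penalty coming from the scaling), and a sampling/regularity argument that converts the collocation conditions into a fast $h$-power decay. The factor $h^{\tau-2}$ (rather than the $h^\tau$ one would get from pure interpolation) is exactly the price of applying the second-order operator $-\Delta$ at the interior nodes, so the strategy must keep track of that loss carefully.

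First I would exploit the fact that the symmetric (Hermite) collocation solution $\bar u$ is the native-space interpolant to $u$ with respect to the functionals $-\Delta\delta_{\bm x_j}$ on $\mathcal I$ and $\delta_{\bm x_j}$ on $\mathcal B$: by the standard Pythagorean identity for generalized interpolation in a reproducing-kernel Hilbert space, $\|u-\bar u\|_{\Phi_\delta}\le \|u\|_{\Phi_\delta}$. Combined with the two-sided equivalence of Lemma~\ref{lemma.01}, this yields
\begin{equation*}
\|u-\bar u\|_{H^\tau(\Omega)}\;\le\; c_2^{1/2}\,\delta^{-\tau}\,\|u-\bar u\|_{\Phi_\delta}\;\le\; c_2^{1/2}\,\delta^{-\tau}\,\|u\|_{\Phi_\delta}\;\le\; (c_2/c_1)^{1/2}\,\delta^{-\tau}\,\|u\|_{H^\tau(\Omega)}.
\end{equation*}
This supplies the $\delta^{-\tau}\|u\|_{H^\tau(\Omega)}$ factor in the theorem and reduces the task to showing $\|u-\bar u\|_{L_2(\Omega)}\lesssim h^{\tau-2}\|u-\bar u\|_{H^\tau(\Omega)}$.

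Next I would view $e:=u-\bar u$ as the solution of an auxiliary Dirichlet problem $-\Delta e=r$ in $\Omega$, $e|_{\partial\Omega}=\rho$, where $r:=-\Delta(u-\bar u)\in H^{\tau-2}(\Omega)$ vanishes at every node of $\mathcal I$, and $\rho\in H^{\tau-1/2}(\partial\Omega)$ vanishes at every node of $\mathcal B$, by construction of the collocation scheme. The hypothesis $\alpha=\lfloor\tau\rfloor>2+d/2$ makes $r$ a continuous function with $\tau-2>d/2$ and guarantees $C^{\alpha,\gamma}$-regularity of $\partial\Omega$ sufficient to apply the Narcowich--Ward--Wendland zero-sampling inequality both on $\Omega$ and on $\partial\Omega$. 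This gives
\begin{equation*}
\|r\|_{L_2(\Omega)}\;\lesssim\; h^{\tau-2}\|r\|_{H^{\tau-2}(\Omega)}\;\lesssim\; h^{\tau-2}\|e\|_{H^\tau(\Omega)},\qquad \|\rho\|_{L_2(\partial\Omega)}\;\lesssim\; h^{\tau-1/2}\|e\|_{H^\tau(\Omega)},
\end{equation*}
for $h$ small enough that the sampling-inequality threshold is met. Elliptic regularity for $-\Delta$ on $\Omega$ then gives $\|e\|_{L_2(\Omega)}\lesssim \|r\|_{L_2(\Omega)}+\|\rho\|_{L_2(\partial\Omega)}$, and since $\tau-2<\tau-1/2$, the first term dominates, producing the overall factor $h^{\tau-2}\|u-\bar u\|_{H^\tau(\Omega)}$.

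Multiplying the two estimates yields the claimed bound $\|u-\bar u\|_{L_2(\Omega)}\le C\,\delta^{-\tau}h^{\tau-2}\|u\|_{H^\tau(\Omega)}$. The main obstacle is the sampling-inequality step: one must verify that the hypothesis $\alpha>2+d/2$ really delivers the smoothness $\tau-2>d/2$ needed so that vanishing at the scattered interior nodes forces $L_2$-smallness, and that the boundary regularity $C^{\alpha,\gamma}$ is precisely what is required for an analogous surface sampling estimate on $\partial\Omega$; all other steps are packaging of norm equivalence and standard elliptic theory for the Poisson problem.
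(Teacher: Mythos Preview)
The paper does not actually prove Theorem~\ref{th.02}; it is quoted verbatim from Farrell and Wendland \cite{farrell2013rbf}, with no argument supplied. Your proposal is therefore not being compared against a proof in the paper but against the original source, and the strategy you outline---native-space optimality of the Hermite interpolant combined with Lemma~\ref{lemma.01} to produce the $\delta^{-\tau}$ factor, followed by sampling inequalities applied to $-\Delta(u-\bar u)$ on $\mathcal I$ and to the trace of $u-\bar u$ on $\mathcal B$, glued together by an a~priori elliptic estimate---is precisely the structure of the Farrell--Wendland argument.

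One technical point worth tightening: the Pythagorean inequality $\|u-\bar u\|_{\Phi_\delta}\le\|u\|_{\Phi_\delta}$ and Lemma~\ref{lemma.01} live on $H^\tau(\mathbb{R}^d)$, whereas $u$ is only given in $H^\tau(\Omega)$. The cited proof handles this by first applying a bounded Sobolev extension $E:H^\tau(\Omega)\to H^\tau(\mathbb{R}^d)$ and running the optimality argument for $Eu$; your write-up silently identifies $u$ with such an extension, so you should make that step explicit. With that caveat, the proposal is correct and matches the reference the paper relies on.
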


In the end, take a closer look at the entries of the collocation matrix $A^{\diamond}$ in (\ref{eq.2.2.12}). The following lemma can be derived with a detailed proof in Appendix A.

\begin{lemma}\label{lemma.02}
Let $\Phi(\bm x)=\phi(r)$ with support radius $1$ and $r=||\bm x||$ for $\bm x=(x_1,\dots,x_d)\in\mathbb{R}^d$. Suppose $\Phi\in C^4(\mathbb{R}^d)$. Denote by $\Delta$ the Laplace operator $\Delta=\sum_{j=1}^{d}\partial^2/\partial x_j^2$.
Then we have
\begin{equation}\label{eq.2.2.14}
    \begin{split}
    \Delta\Phi(\bm x) &= \phi''(r)+\frac{d-1}{r}\phi'(r) := F_1(r), \\
    \Delta^2\Phi(\bm x) &= \phi^{(4)}(r)+\frac{2(d-1)}{r}\phi'''(r)+\frac{(d-1)(d-3)}{r^2}\phi''(r)-\frac{(d-1)(d-3)}{r^3}\phi'(r) := F_2(r),
    \end{split}
\end{equation}
provided that $r\neq0$ or when $r=0$ these two formulas are well defined.
$\Delta^2$ represents the double Laplacian, and $F_1$, $F_2$ are radial functions with support radius $1$.
\end{lemma}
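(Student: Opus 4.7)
The plan is a direct computation: both identities follow from the chain rule, exploiting the fact that $r = \|\bm x\|$ is differentiable away from the origin with $\partial r/\partial x_j = x_j/r$, together with the elementary observation that the Laplacian of any radial function is again radial.

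First I would treat the region $r \neq 0$. Differentiating $\Phi(\bm x) = \phi(r)$ twice with respect to $x_j$ gives
\begin{equation*}
\frac{\partial^2 \Phi}{\partial x_j^2} = \phi''(r)\,\frac{x_j^2}{r^2} + \phi'(r)\left(\frac{1}{r} - \frac{x_j^2}{r^3}\right),
\end{equation*}
and summing $j=1,\dots,d$ the $x_j^2$ pieces collapse via $\sum_j x_j^2 = r^2$ to yield $\Delta \Phi = \phi''(r) + \frac{d-1}{r}\phi'(r) = F_1(r)$. Since this expression depends only on $r$, $\Delta \Phi$ is itself a radial function, so I can apply the same identity to $F_1$ in place of $\phi$: $\Delta^2 \Phi = F_1''(r) + \frac{d-1}{r}F_1'(r)$. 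Computing $F_1'$ and $F_1''$ term by term, then collecting the coefficients of $\phi^{(4)}$, $\phi'''$, $\phi''$, $\phi'$, gives exactly the constants appearing in $F_2(r)$ — in particular the cross term $-\frac{2(d-1)}{r^2}\phi''(r) + \frac{(d-1)^2}{r^2}\phi''(r) = \frac{(d-1)(d-3)}{r^2}\phi''(r)$ and similarly for the $\phi'$ coefficient, producing $-(d-1)(d-3)/r^3$.

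The delicate step, which I expect to be the main obstacle, is the well-definedness at $r=0$ and verifying the compact support claim. Since $\Phi \in C^4(\mathbb{R}^d)$ is radial, the univariate profile $\phi$ is an even function of $r$, so $\phi'(0)=\phi'''(0)=0$. A second-order Taylor expansion then yields $\phi'(r)/r = \phi''(0) + O(r^2)$ as $r \to 0$, so $F_1$ extends continuously (in fact smoothly) to $r=0$; the analogous expansions for $\phi'(r)/r^3$ and $\phi''(r)/r^2$ using $\phi'(r) = \phi''(0)r + \tfrac{1}{6}\phi^{(4)}(0)r^3 + O(r^5)$ and $\phi''(r) = \phi''(0) + \tfrac{1}{2}\phi^{(4)}(0)r^2 + O(r^4)$ show that the apparent singularities in $F_2$ cancel, leaving a well-defined value at the origin. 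Finally, because $\Phi$ has support radius $1$ and belongs to $C^4$, all derivatives of $\phi$ vanish for $r>1$, so $F_1$ and $F_2$ likewise vanish there, giving the stated support radius.
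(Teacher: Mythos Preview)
Your argument is correct. The computation of $\Delta\Phi$ is identical to the paper's. For $\Delta^2\Phi$ you take a slightly different route: you observe that $F_1$ is itself radial and reapply the radial Laplacian formula $\Delta^2\Phi = F_1'' + \tfrac{d-1}{r}F_1'$, whereas the paper redoes the partial-derivative calculation from scratch on $\phi''(r) + \tfrac{d-1}{r}\phi'(r)$, expanding each $\partial^2/\partial x_j^2$ term by term before summing. Your bootstrap is cleaner and less error-prone; the paper's version is more explicit but longer. You also go beyond what the paper actually proves: the lemma's proviso ``provided that $r\neq 0$ or when $r=0$ these two formulas are well defined'' is stated as a hypothesis, and the paper's Appendix~A does not address $r=0$ at all (that is deferred to Remark~\ref{rem.01}, and only for Wendland functions). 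Your Taylor argument from the evenness of $\phi$ is a nice addition but not required for the lemma as stated.
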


\newtheorem{remark}{Remark}
\begin{remark}\label{rem.01}
If $\Phi\in C^4(\mathbb{R}^d)$ takes the form of Wendland's functions, the coefficients of $r$ and $r^3$ in the univariate polynomials $\phi$ vanish by \cite[Theorem 9.12]{wendland2004scattered}. It is easy to verify that $\Delta\Phi$ and $\Delta^2\Phi$ in (\ref{eq.2.2.14}) are well defined when $r=0$; besides, they are also radial functions with compact support. By the positive definiteness of $\Phi$, we can also infer that $-\Delta\Phi$ and $\Delta^2\Phi$ are positive definite based on their Fourier transforms that can be derived from \cite{stein1971introduction} and Theorem \ref{th.01}. In addition, by the definition of $\Phi_\delta$ in (\ref{eq.2.1.7}), we have
\begin{equation*}
    \Delta^{2}\Phi_\delta(\bm x) = \delta^{-d-4}F_1(||\bm x||/\delta), \quad \Delta\Phi_\delta(\bm x) = \delta^{-d-2}F_2(||\bm x||/\delta).
\end{equation*}
\end{remark}

By Remark \ref{rem.01}, the collocation matrix $A^{\diamond}$ from (\ref{eq.2.2.12}) can be further rewritten as
\begin{equation}\label{eq.2.2.15}
    A^{\diamond}
    = \delta^{-d}\left[
    \begin{array}{cc}
        \delta^{-4}\mathcal{F}_\mathcal{II} & -\delta^{-2}\mathcal{F}_\mathcal{IB} \\
        -\delta^{-2}\mathcal{F}_\mathcal{IB}^T & \mathcal{F}_\mathcal{BB}
    \end{array}
    \right],
\end{equation}
where $\delta^{-4}$ and $\delta^{-2}$ are factors of the block matrices, and
\begin{equation}\label{eq.2.2.16}
    \begin{aligned}
        (\mathcal{F}_\mathcal{II})_{ij} & = F_1(r_{ij}/\delta), & \bm x_i&,\bm x_j \in \mathcal{I}, \\
        (\mathcal{F}_\mathcal{IB})_{ij} & = F_2(r_{ij}/\delta), & \bm x_i&\in \mathcal{I}, \bm x_j \in \mathcal{B},\\
        (\mathcal{F}_\mathcal{BB})_{ij} & = \phi(r_{ij}/\delta), & \bm x_i&,\bm x_j \in \mathcal{B}
    \end{aligned}
\end{equation}
with $r_{ij} := ||\bm x_i-\bm x_j||$ the Euclidean distance and $F_1, F_2$ the radial functions defined in (\ref{eq.2.2.14}).

\subsection{Diagonal preconditioner}
To ensure computational efficiency, it is desirable to use small support radius $\delta$ for a sparse collocation matrix. However, small values of $\delta$ will induce ill-conditioned collocation matrix. Fasshauer has numerically illustrated this, and pointed out that the reason is due to the different scaling of the different parts of the collocation matrix with respect to the support radius $\delta$ \cite{fasshauer1999solving}. It can be seen from $A^{\diamond}$ in (\ref{eq.2.2.15}) that the diagonal block corresponding to the inner points scales like $O(\delta^{-4})$, the diagonal part corresponding to the boundary points scales like $O(1)$, and the off-diagonal block scales like $O(\delta^{-2})$.

To overcome this kind of ill-conditioning, Fasshauer also provided a diagonal preconditioning strategy, and numerically verified its efficiency \cite{fasshauer1999solving}. Later, Farrell and Wendland \cite{farrell2013rbf} theoretically demonstrated this. The idea applied to our problem can be stated as follows. Instead of solving the linear system $A^{\diamond}{\bm c^{\diamond}}=\bm b^{\diamond}$ in (\ref{eq.2.2.11}), we rewrite the linear system as
\begin{equation*}
    \mathcal PA^{\diamond}\mathcal P\mathcal P^{-1}\bm c^{\diamond} = \mathcal P\bm b^{\diamond}
\end{equation*}
where
\begin{equation}\label{eq.2.3.19}
    \mathcal P={\rm diag}(\delta^2,\dots,\delta^2,1,\dots,1)
\end{equation}
is an $N\times N$ diagonal matrix with the first $N_\mathcal I$ diagonal terms $\delta^2$ and the remaining $1$.
And solve the preconditioned system as
\begin{equation}\label{eq.2.3.17}
    A \bm c = \bm b,
\end{equation}
where the preconditioned collocation matrix $A$ has the representation
\begin{equation}\label{eq.2.3.20}
A=\mathcal PA^{\diamond}\mathcal P
= \delta^{-d}\left[
\begin{array}{cc}
    \mathcal{F}_\mathcal{II} & \mathcal{F}_\mathcal{IB} \\
    \mathcal{F}_\mathcal{IB}^T & \mathcal{F}_\mathcal{BB}
\end{array}
\right]
\end{equation}
by (\ref{eq.2.2.15}) with the block matrices defined in (\ref{eq.2.2.16}), and
\begin{equation}\label{eq.2.3.18}
    \bm c = \mathcal P^{-1}\bm c^{\diamond}, \quad \bm b=\mathcal P\bm b^{\diamond}.
\end{equation}
The condition number $\kappa$ of $A$ can be bounded \cite{wendland2007stability, farrell2013rbf}.

\begin{theorem}\label{th.03}
Assume the support radius $\delta\in(0,1]$. The condition number $\kappa$ of the preconditioned collocation matrix $A$ can be bounded by
\begin{equation*}
\kappa \leq C \left(1+\delta/q\right)^d  \left(\delta/q\right)^{2\tau-d},
\end{equation*}
and the sparsity $s$ of $A$ can be bounded by $s \leq \left(1+\delta/q\right)^d$,
where $C>0$ is a constant independent of the data set $\mathcal{X}$, $q$ is the separation distance defined in (\ref{eq.2.2}), and $\tau$ indicates the order  of smoothness of the Sobolev space $H^\tau(\mathbb{R}^d)$ generated by $\Phi$ in (\ref{eq.2.2.10}).
\end{theorem}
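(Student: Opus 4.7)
My plan splits into two largely independent parts: a volume-packing bound for the sparsity, and a ratio bound for the condition number obtained from separate estimates on $\lambda_{\max}(A)$ and $\lambda_{\min}(A)$ whose common $\delta^{-d}$ prefactor will cancel. For the sparsity bound, since $F_1$, $F_2$, and $\phi$ all have support radius $1$, the scaled entries in~(\ref{eq.2.2.16}) vanish unless $\|\bm x_i-\bm x_j\|\leq \delta$, so row~$i$ of $A$ has at most as many nonzero entries as there are collocation points in the closed ball $B(\bm x_i,\delta)$. By~(\ref{eq.2.2}) the open balls $B(\bm x_j,q)$ are pairwise disjoint, and each $\bm x_j\in B(\bm x_i,\delta)$ gives $B(\bm x_j,q)\subseteq B(\bm x_i,\delta+q)$; a Lebesgue-volume comparison then yields the desired bound $(1+\delta/q)^d$.

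For the upper bound on $\lambda_{\max}(A)$ I would exploit that the three radial functions in~(\ref{eq.2.2.16}) are continuous on the compact set $[0,1]$ and hence uniformly bounded. The spectral norm of the symmetric matrix $A$ is controlled by its maximum absolute row sum, so combined with the sparsity bound just established this gives $\lambda_{\max}(A)\leq C\delta^{-d}(1+\delta/q)^d$.

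The key and hardest step is the lower bound on $\lambda_{\min}(A)$. The idea is to interpret $\bm c^T A\bm c$ (up to the $\delta^{-d}$ prefactor and the preconditioner $\mathcal{P}$ from~(\ref{eq.2.3.19})) as the squared native-space norm $\|s_{\bm c}\|_{\Phi_\delta}^2$ of the Hermite-Birkhoff-type trial function $s_{\bm c}(\bm x) = -\sum_{\bm x_j\in\mathcal{I}}c_j\Delta\Phi_\delta(\bm x-\bm x_j) + \sum_{\bm x_j\in\mathcal{B}}c_j\Phi_\delta(\bm x-\bm x_j)$, via the reproducing-kernel identities for $-\Delta\Phi_\delta$ and $\Phi_\delta$ in $\mathcal{N}_{\Phi_\delta}(\mathbb{R}^d)$. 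Lemma~\ref{lemma.01} converts this to an $H^\tau$ norm at the cost of a $\delta^{-\tau}$ factor, and a Narcowich-Ward-type sampling/Bernstein estimate based on the separation distance $q$ (valid because $\tau>d/2$ by~(\ref{eq.2.2.10})) gives a lower bound on $\|s_{\bm c}\|_{H^\tau(\mathbb{R}^d)}^2$ of order $q^{d-2\tau}\|\bm c\|_2^2$ after careful bookkeeping of the $\delta$-powers introduced by $\mathcal{P}$ and by the scaling in~(\ref{eq.2.1.7}). This yields $\lambda_{\min}(A)\geq C\delta^{-d}(q/\delta)^{2\tau-d}$, and dividing the two eigenvalue bounds cancels the $\delta^{-d}$ and produces the stated condition-number estimate. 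The main obstacle is this $\lambda_{\min}$ step: combining Lemma~\ref{lemma.01}, the Narcowich-Ward machinery, and the mixed Hermite-Birkhoff block structure of $A$ while accounting for every $\delta$ exponent is delicate, and is essentially the content cited from~\cite{wendland2007stability,farrell2013rbf}.
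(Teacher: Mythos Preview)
Your proposal is correct, and there is nothing to compare it against: the paper does not prove Theorem~\ref{th.03}. It states the result with citations to~\cite{wendland2007stability,farrell2013rbf} and moves on. Your outline---volume-packing for the sparsity, a row-sum bound for $\lambda_{\max}(A)$, and the native-space/Sobolev-norm/Narcowich--Ward chain for $\lambda_{\min}(A)$ with the $\delta^{-d}$ prefactor cancelling in the ratio---is precisely the standard argument developed in those references, as you yourself note in your final sentence.
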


In this paper, we aim to find a quantum RBF method solving the Poisson equation (\ref{eq.2.2.8}), and investigate the extent to which the RBF method can be accelerated. We focus on the following problem based on the symmetric collocation method by CSRBFs.

\newtheorem{problem}{Problem}
\begin{problem}\label{problem 1}
Suppose that the Poisson equation (\ref{eq.2.2.8}) has a unique solution $u\in H^\tau(\Omega)$. Let $\Omega\subseteq\mathbb{R}^d$ have a $C^{\alpha,\gamma}$-boundary for $\gamma\in[0,1)$ such that $\tau=\alpha+\gamma$ and $\alpha = \lfloor\tau\rfloor>2+d/2 $. Let  $\Phi_\delta$ be a scaled Wendland's function with support radius $\delta\in(0,1]$ defined by (\ref{eq.2.1.7}).
Suppose the collocation points in $\mathcal X$ have a quasi-uniform distribution satisfying (\ref{eq.2.3}).
In the quantum setting, assume oracles that have access to the nonzero entries of the preconditioned collocation matrix $A$, and prepare the quantum state of $\bm b$. The goal is to output a quantum state whose amplitudes are proportional to $u(\bm x)$ on the set $\mathcal X$.
\end{problem}

\section{Quantum RBF method}
The key step towards handling the Poisson problem more quickly on a quantum computer is to replace the classical algorithm for solving the preconditioned linear system (\ref{eq.2.3.17}) with a quantum algorithm.
Such a quantum algorithm was recently presented by Lin and Tong \cite{lin2020optimal} based on quantum filtering, which achieves the near optimal query complexity $\widetilde{O}(s\kappa\log(1/\epsilon_L))$ for an $s$-sparse matrix, with $\kappa$ the condition number and $\epsilon_L$ the error for the linear system. In this paper we exploit this strategy to solve (\ref{eq.2.3.17}). We further apply block-encoding techniques \cite{gilyen2018quantum} to output a quantum state encoding the solution of the Poisson equation at the collocation points.

\subsection{Block encodings}
An important idea behind the quantum RBF method is the use of block encodings. A block encoding embeds a properly scaled matrix of interest into a larger unitary operator that can be efficiently implemented on a quantum computer. An $(a+n)$-qubit unitary $U_A$ is called an $(\eta,a,\varepsilon)$-block-encoding of an $n$-qubit operator $A$ if
\begin{equation*}
    ||A-\eta(\bra{0^a}\otimes I_n)U_A(\ket{0^a}\otimes I_n)||\leq \varepsilon,
\end{equation*}
where $I_n$ denotes an $n$-qubit identity.
If we are provided a block-encoding of a Hermitian matrix $A$, quantum eigenvalue transforms based on quantum signal processing (QSP) allow to construct a block-encoding of an arbitrary polynomial transform of the eigenvalues of $A$ \cite{gilyen2018quantum, martyn2021grand}. When the polynomial naturally has a parity, the polynomial eigenvalue transformation can be specialized as follows.

\begin{theorem}(\rm Polynomial eigenvalue transformation \cite{gilyen2018quantum})\label{lemma.05}
Let $U_A$ be an $(\eta,a,\varepsilon)$-block-encoding of a Hermitian matrix $A$. Let $p\in\mathbb{R}[x]$ be a degree-$\ell$ even or odd real polynomial and $|p(x)|\leq1$ for any $x\in[-1,1]$. Then there exists a $(1,a+1,4\ell\sqrt{\varepsilon/\eta})$-block-encoding of $p(A/\eta)$ using $\ell$ queries of $U_A$, $U_A^{\dagger}$, and $O\left((a+1)\ell\right)$ other one- and two-qubit gates.
\end{theorem}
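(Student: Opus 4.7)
The plan is to invoke the quantum singular value transformation framework, built from qubitization and quantum signal processing. First, I would normalize the problem: set $\Pi := \ket{0^a}\bra{0^a}\otimes I_n$ and $\tilde{A} := \eta\,\Pi\, U_A\,\Pi$ (as an operator on the encoded subspace), so that by hypothesis $\|A-\tilde{A}\|\leq \varepsilon$ while $U_A$ is an exact $(\eta,a,0)$-block-encoding of $\tilde{A}$. The argument then splits into (i) constructing an \emph{exact} $(1,a+1,0)$-block-encoding of $p(\tilde{A}/\eta)$ from $\ell$ uses of $U_A$ and $U_A^{\dagger}$, and (ii) propagating the block-encoding error through $p$ to obtain a bound on $\|p(A/\eta)-p(\tilde{A}/\eta)\|$.

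For step (i) I would follow the standard qubitization construction. Introduce one extra ancilla qubit controlling rotations $e^{i\phi_k Z}$, and interleave alternating copies of $U_A$ and $U_A^{\dagger}$ with reflections of the form $2\Pi - I$. Because $U_A$ block-diagonalizes into two-dimensional invariant subspaces — one per eigenvalue $\lambda$ of $\tilde{A}$ — on which it acts as a rotation by angle $\arccos(\lambda/\eta)$, the QSP theorem of Gilyén–Su–Low–Wiebe guarantees that with an appropriate choice of phases $\phi_1,\dots,\phi_{\ell}$ the resulting circuit realizes any real polynomial $p$ of degree $\ell$ with $|p|\leq 1$ on $[-1,1]$ and matching parity on each eigenspace. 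The parity hypothesis on $p$ is exactly what ensures a single interleaved chain closes up on $p$ itself, rather than on a mixture of its even and odd parts; otherwise one would need to combine two chains linearly, costing an extra block-encoding layer. The resulting circuit uses $\ell$ queries to $U_A$ or $U_A^{\dagger}$ and $O((a+1)\ell)$ one- and two-qubit gates for the controlled reflections and the ancilla rotations, matching the advertised cost.

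For step (ii), which I expect to be the main technical obstacle, a naive Lipschitz bound via Markov's inequality for polynomials on $[-1,1]$ would give an $O(\ell^2\varepsilon/\eta)$ estimate, which is too weak. The sharper $4\ell\sqrt{\varepsilon/\eta}$ bound must be obtained at the level of the qubiterized walk rather than at the level of the polynomial: a perturbation of $\tilde{A}/\eta$ of size $\varepsilon/\eta$ induces a perturbation of the walk rotation angles of size $O(\sqrt{\varepsilon/\eta})$, because these angles are $\arccos$ of the singular values and $\arccos$ is only Hölder-$\tfrac12$ near $\pm 1$. The angular error then accumulates \emph{linearly} over the $\ell$ uses of the walk, and tracking the constants coming from the two reflections and the ancilla-controlled rotations supplies the prefactor $4$. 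Combining (i) with this perturbation bound yields the claimed $(1,a+1,4\ell\sqrt{\varepsilon/\eta})$-block-encoding of $p(A/\eta)$, finishing the proof.
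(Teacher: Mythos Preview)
The paper does not prove this statement at all: Theorem~\ref{lemma.05} is quoted verbatim as a known result from Gily{\'e}n, Su, Low and Wiebe \cite{gilyen2018quantum} and used as a black box, with no accompanying proof or sketch. So there is no ``paper's own proof'' to compare against.

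That said, your sketch is a faithful outline of the argument in \cite{gilyen2018quantum}. Step~(i) is exactly the qubitization/QSP construction (their Theorems~17 and~56): the single ancilla, the alternating $U_A$/$U_A^{\dagger}$ interleaved with $(2\Pi-I)$-reflections and $Z$-rotations, and the role of the parity hypothesis are all correctly identified, as is the gate count. For step~(ii), your heuristic for the $\sqrt{\varepsilon/\eta}$ scaling --- that a spectral perturbation of size $\varepsilon/\eta$ becomes an $O(\sqrt{\varepsilon/\eta})$ perturbation of the walk's rotation angle via the H{\"o}lder-$\tfrac12$ behaviour of $\arccos$ near $\pm1$, and then accumulates linearly in $\ell$ --- captures the right intuition. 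The actual proof in \cite{gilyen2018quantum} (their Lemma~23) is organized somewhat differently: it bounds the difference of the two QSVT circuits directly as a telescoping sum of $\ell$ terms and controls each term by a $2\times 2$ calculation, which is where the constant $4$ and the square root emerge. Your route would work but would need care to recover the sharp constant; the telescoping argument avoids invoking $\arccos$ explicitly.
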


Without loss of generality, by appropriate scaling of the preconditioned collocation matrix $A$ in (\ref{eq.2.3.20}), we assume its eigenvalues are restricted to the range $D_\kappa := [-1,-1/\kappa]\cup[1/\kappa,1]$ such that $||A||\leq 1$, where $\kappa$ is the condition number of $A$. This also implies that all entries of $A$ satisfy $|A_{ij}|\leq 1$ \cite{golub2013matrix}.
We also assume that the number of the collocation points is $N=2^n$, and the sparsity of $A$ is $s=2^m$ for some integer $m<n$.
To embed $A$ into a block of a unitary matrix, assume an oracle $\mathcal{O}_{A_1}$ to locate the nonzero entries as
\begin{equation}\label{eq.3.1.21}
    \mathcal{O}_{A_1} \ket{l,j} = \ket{\nu(j,l),j}
\end{equation}
for $j\in\{1,\dots,N\}$ and $l\in\{1,\dots,s\}$, where $\nu(j,l)$ is a function that returns the row index of the $l$th nonzero entry of the $j$th column.
Assume an oracle $\mathcal{O}_{A_2}$ to compute the nonzero entries $A_{ij}$, that is,
\begin{equation}\label{eq.3.1.22}
    \mathcal{O}_{A_2} \ket{i, j, z} = \ket{i,j, z\oplus A_{ij}}
\end{equation}
for any $i,j\in\{1,\dots,N\}$, where $A_{ij}$ is the binary representation of the $(i,j)$th element. Then a block embedding of $A$ can be achieved. The strategy has also been used in \cite{childs2017quantum, gilyen2018quantum, berry2015hamiltonian}, and a variant version is given as follows.

\begin{theorem}\label{th.05}
For the $s$-sparse preconditioned collocation matrix $A$ in (\ref{eq.2.3.20}), assume $A$ has been rescaled such that its eigenvalues lie in the range $D_\kappa$. Assume we have oracles $\mathcal{O}_{A_1}$ and $\mathcal{O}_{A_2}$ defined by (\ref{eq.3.1.21}) and (\ref{eq.3.1.22}). Then we can implement an $(s,m+1,0)$-block-encoding $U_A$ of $A$ by using $O(1)$ queries to $\mathcal{O}_{A_1}$ and $\mathcal{O}_{A_2}$ in time $O({\rm poly}\log N)$, where $m=\log (s)$.
\end{theorem}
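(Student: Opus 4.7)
The plan is to construct $U_A$ by the standard sparse matrix block-encoding template (in the spirit of Childs--Kothari--Somma and Lemma 48 of \cite{gilyen2018quantum}), specialized to our preconditioned collocation matrix $A$, whose symmetry lets a single sparse-access oracle play the role of both row and column access. First I would lay out the ancilla registers: an $m$-qubit index register (for the local index $l\in\{1,\dots,s\}$ of a nonzero in a column) together with one extra qubit that will absorb the entry value through a controlled rotation, for a total ancilla width of $m+1$, matching the claim.

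Starting from $\ket{0^{m+1}}\ket{j}$, I would apply $H^{\otimes m}$ on the index register to produce $\frac{1}{\sqrt{s}}\sum_{l=1}^{s}\ket{l}\ket{0}\ket{j}$, then invoke $\mathcal{O}_{A_1}$ in (\ref{eq.3.1.21}) to rewrite the local index $l$ as the corresponding row index $\nu(j,l)$. A single call to $\mathcal{O}_{A_2}$ from (\ref{eq.3.1.22}), together with a scratch register that is uncomputed by $\mathcal{O}_{A_2}^{\dagger}$, combined with a controlled $R_y$ rotation, imprints the value $A_{\nu(j,l),j}$ onto the amplitude of $\ket{0}$ on the extra ancilla qubit; this is legal because $||A||\leq 1$ ensures $|A_{ij}|\leq 1$. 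A final ``uncomputation sweep'' consisting of $\mathcal{O}_{A_1}^{\dagger}$ followed by a second layer of Hadamards collapses the index register back onto $\ket{0^m}$ with the correct weight.

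Verification is then a direct amplitude calculation, giving
\begin{equation*}
(\bra{0^{m+1}}\otimes I)\,U_A\,(\ket{0^{m+1}}\otimes\ket{j}) \;=\; \frac{1}{s}\sum_{i}A_{ij}\ket{i},
\end{equation*}
so that $U_A$ is exactly a $(s,m+1,0)$-block-encoding of $A$. For the resource count, each of the two oracles is invoked a constant number of times, the Hadamard layer costs $m=O(\log s)=O(\log N)$ gates, the controlled rotation is one multi-qubit gate on a $\mathrm{poly}\log N$-bit value register, and evaluating $\mathcal{O}_{A_1}$ and $\mathcal{O}_{A_2}$ on $\log N$-bit indices is done by $O(\mathrm{poly}\log N)$ elementary gates, giving the stated runtime.

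The hard part, to the extent anything is hard here, will be pinning down that the uncomputation sweep genuinely yields a unitary whose $\ket{0^{m+1}}\bra{0^{m+1}}$-block equals $A/s$ rather than a row- or column-averaged artifact; this is precisely where the symmetry $A=A^T$ of the preconditioned collocation matrix in (\ref{eq.2.3.20}) simplifies matters, since we do not need a separate row oracle distinct from $\mathcal{O}_{A_1}$, and $\mathcal{O}_{A_1}^{\dagger}$ can be used interchangeably in the dual role. The zero-error ($\varepsilon=0$) conclusion then follows because no approximation has been introduced at any step.
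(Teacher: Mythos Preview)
Your plan matches the paper's approach right up to the ``uncomputation sweep,'' and that is exactly where it breaks. If after the controlled rotation you apply $\mathcal{O}_{A_1}^{\dagger}$ and then Hadamards on the index register, a direct calculation of the $\ket{0^{m+1}}$-component gives
\[
(\bra{0^{m+1}}\otimes I)\,U_A\,(\ket{0^{m+1}}\otimes\ket{j})
\;=\;\frac{1}{s}\sum_{l}A_{\nu(j,l),j}\,\ket{j}
\;=\;\Bigl(\tfrac{1}{s}\textstyle\sum_i A_{ij}\Bigr)\ket{j},
\]
because the system register is never touched and still holds $\ket{j}$. This is precisely the ``column-averaged artifact'' you anticipated, and the symmetry $A=A^{T}$ does not repair it: symmetry only tells you that a single $\mathcal{O}_{A_1}$ can serve as both a row- and a column-locator, not that uncomputing it on the same registers magically routes the row index into the output.

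The missing ingredient, which the paper supplies, is a SWAP between the ancilla index register and the system register before the dual call to $\mathcal{O}_{A_1}$. Concretely, the paper writes $U_A=U_L^{\dagger}U_R$, where $U_R$ is your preparation-plus-rotation map and $U_L=(I\otimes\mathrm{SWAP})(I\otimes\mathcal{O}_{A_1})(I\otimes H^{\otimes m}\otimes I)$. The SWAP moves $\nu(j,l)$ into the system register; the subsequent $\mathcal{O}_{A_1}^{\dagger}$ then acts with the former column index $j$ in the ancilla slot and the row index $i=\nu(j,l)$ in the system slot. Taking the inner product of $U_L\ket{0^{m+1}}\ket{i}$ with $U_R\ket{0^{m+1}}\ket{j}$ now produces the Kronecker deltas $\delta_{i,\nu(j,l)}\delta_{\nu(i,l'),j}$ and hence $A_{ij}/s$, as required. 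Symmetry of $A$ is what justifies reusing the \emph{same} oracle on the left side, but the SWAP is indispensable.
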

\begin{proof}
The goal is to construct $U_A$ such that $\bra{0}\bra{0^m}\bra{i}U_A\ket{0}\ket{0^m}\ket{j} = A_{ij}/s$.
Starting from $\ket{0}\ket{0^m}\ket{j}$, perform $m$ Hadamard gates $H^{\otimes m}$ on the second register; then apply the oracle $\mathcal{O}_{A_1}$ to the second and third registers to derive
\begin{equation*}
\ket{0}\ket{0^m}\ket{j} \stackrel{H^{\otimes m}}{\longrightarrow}
\frac{1}{\sqrt{s}}\sum\limits_{l=0}^{s-1}\ket{0}\ket{l}\ket{j}
\stackrel{\mathcal{O}_{A_1}}{\longrightarrow}
\frac{1}{\sqrt{s}}\sum\limits_{l=0}^{s-1}\ket{0}\ket{\nu(j,l)}\ket{j}.
\end{equation*}
Adding an ancilla register and a call to the oracle $\mathcal{O}_{A_2}$ then gives the value of $A_{\nu(j,l),j}$ in the ancilla register. Based on the value of $A_{\nu(j,l),j}$, perform a controlled rotation $R$ such that the first register is rotated from $\ket{0}$ to
\begin{equation*}
A_{\nu(j,l),j}\ket{0} + \sqrt{1-|A_{\nu(j,l),j}|^2} \ket{1}.
\end{equation*}
Then invert the oracle $\mathcal{O}_{A_2}$ to erase the value of $A_{\nu(j,l),j}$ from the ancilla register. In summary, the above steps denoted by $U_R$ act as
\begin{equation}\label{eq.3.1.23}
U_R: \ket{0}\ket{0^m}\ket{j} \rightarrow \frac{1}{\sqrt{s}} \sum\limits_{l=0}^{s-1} \left(A_{\nu(j,l),j}\ket{0} + \sqrt{1-|A_{\nu(j,l),j}|^2} \ket{1}\right) \ket{\nu(j,l)} \ket{j}.
\end{equation}
Next, define a unitary $U_L := \left(I_1\otimes SWAP\right) \left(I_1\otimes\mathcal{O}_{A_1}\right) \left(I_1\otimes H^{\otimes m}\otimes I_n\right)$, and apply it to $\ket{0}\ket{0^m}\ket{i}$ to get
\begin{equation}\label{eq.3.1.24}
    U_L: \ket{0}\ket{0^m}\ket{i} \rightarrow  \frac{1}{\sqrt{s}}\sum\limits_{l'=0}^{s-1}\ket{0}\ket{i}\ket{\nu(i,l')},
\end{equation}
where the swap operator acts on and swaps the last two registers. Take the inner product between (\ref{eq.3.1.23}) and $(\ref{eq.3.1.24})$ to yield
\begin{equation*}
    \bra{0}\bra{0^m}\bra{i}U_A\ket{0}\ket{0^m}\ket{j} = \frac{1}{s}\sum\limits_{l,l'} A_{\nu(j,l),j}\delta_{i,\nu(j,l)}\delta_{\nu(i,l'),j}=\frac{1}{s}A_{ij}.
\end{equation*}
Thus, $U_A=U_L^\dagger U_R$ is an $(s,m+1,0)$-block-encoding of $A$, as depicted in Figure \ref{Block-encoding}. The gate complexity is $O({\rm poly}\log N)$ from the Hadamard gates and swap operator.
\end{proof}

\begin{figure}[H]
\centerline{
\Qcircuit @C=1.0em @R=1em{
    \lstick{\ket{0}} & \qw & \qw & \qw  & \qw & \gate{R} & \qw \barrier[-2.6em]{3} & \qw & \qw & \qw & \meter & \rstick{\ket{0}} \qw\\
    \lstick{\ket{0^m}} & {/} \qw & \gate{H^{\otimes m}} & \multigate{1}{\mathcal{O}_{A_1}} & \multigate{2}{\mathcal{O}_{A_2}} & \qw & \multigate{2}{\mathcal{O}_{A_2}^\dagger} & \multigate{1}{SWAP} & \multigate{1}{\mathcal{O}_{A_1}^\dagger} & \gate{H^{\otimes m}} & \meter & \rstick{\ket{0^m}} \qw \\
    \lstick{\ket{j}} & {/} \qw & \qw & \ghost{\mathcal{O}_{A_1}} & \ghost{\mathcal{O}_{A_2}} & \qw & \ghost{\mathcal{O}_{A_2}^\dagger} & \ghost{SWAP} & \ghost{\mathcal{O}_{A_1}^\dagger} & \qw & \qw & \rstick{\frac{1}{s}\sum_i A_{ij}\ket{i}} \qw\\
    \lstick{\ket{0}} & {/} \qw & \qw & \qw & \ghost{\mathcal{O}_{A_2}} & \ctrl{-3} & \ghost{\mathcal{O}_{A_2}^\dagger} & \qw & \qw & \qw & \qw & \rstick{\ket{0}} \qw\\
}
}
\caption{A schematic circuit for the block encoding $U_A$ of $A$. $m=\log(s)$. The first part before the dashed line is to implement $U_R$, then implement $U_L^\dagger$.}
\label{Block-encoding}
\end{figure}
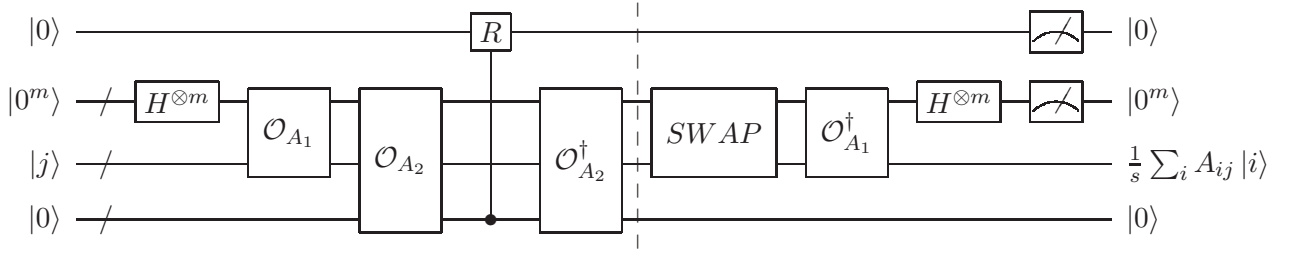

\begin{remark}
If the collocation points are properly enumerated, the matrix $A$ in (\ref{eq.2.3.20}) can be rearranged as a banded matrix due to the compactness of the radial function. In this sense, the nonzero entries in $A$ can be easily located, and the oracle $\mathcal{O}_{A_1}$ can be simplified.
\end{remark}

\begin{remark}
For simplicity, we assume the oracle $\mathcal{O}_{A_2}$ allows us to compute the exact value of the nonzero entries $A_{ij}$ of $A$. However, this can be relaxed. It can be seen from (\ref{eq.2.2.16}) and (\ref{eq.2.3.20}) that $A_{ij}$ depends on the pairwise distances $r_{ij}=||\bm x_i-\bm x_j||$ and the radial functions $\phi$ and $F_1$, $F_2$  that have explicit formulas from Lemma \ref{lemma.01}. In this sense, we can estimate $A_{ij}$ as follows. Employ the quantum distance estimation algorithm with suitable assumptions as in \cite{kerenidis2019q,  kerenidis2021quantum} to approximate the pairwise distances with $\ket{i,j}\ket{0}\mapsto\ket{i,j}\ket{r_{ij}}$. Then define oracles controlled by the index $i,j$ to evaluate the functions $\phi$, $F_1$ and $F_2$. And call the oracles to perform $\ket{i,j}\ket{r_{ij}}\ket{0} \mapsto \ket{i,j}\ket{r_{ij}}\ket{A_{ij}}$. Finally, return the register of  $\ket{r_{ij}}$ to $\ket{0}$ yielding $\ket{i,j}\ket{0} \mapsto \ket{i,j}\ket{A_{ij}}$.
\end{remark}

As for the quantum state $\ket{\bm b} := \bm b/||\bm b||$ of the right hand side of the linear system (\ref{eq.2.3.17}), we assume an oracle $\mathcal{O}_{\bm b}$ to prepare it like previous quantum algorithms \cite{harrow2009quantum, childs2017quantum, childs2021high}, i.e.,
\begin{equation}\label{eq.3.1.25}
    \mathcal{O}_{\bm b} \ket{0} = \ket{\bm b}.
\end{equation}

Finally, we consider block encodings of an initial- and final-state Hamiltonians that will be used for adiabatic evolution in the next subsection.

\begin{theorem} \label{th.06}
Let
\begin{equation}\label{eq.3.1.26}
    H_0 = \left[
    \begin{array}{cc}
        0 & P^{\perp}_{\bm b} \\
        P^{\perp}_{\bm b} & 0
    \end{array}
    \right], \quad
    H_1 = \left[
    \begin{array}{cc}
        0 & A P^{\perp}_{\bm b} \\
        P^{\perp}_{\bm b} A & 0
    \end{array}
    \right]
\end{equation}
with the projector $P^{\perp}_{\bm b}=I_n-\ket{\bm b}\bra{\bm b}$. Then
$P^{\perp}_{\bm b}$ has a $(1,1,0)$-block-encoding $U_{P^{\perp}_{\bm b}}$ with two uses of $\mathcal{O}_{\bm b}$. We can implement a $(1,1,0)$-block-encoding $U_{H_0}$ of $H_0$ with $O(1)$ uses of $\mathcal{O}_{\bm b}$. Besides,
we can implement an $(s,m+3,0)$-block-encoding $U_{H_1}$ of $H_1$ using $O(1)$ queries to $U_{A}$ and $\mathcal{O}_{\bm b}$, where $m=\log(s)$.
\end{theorem}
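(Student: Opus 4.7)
The plan is to construct the three block-encodings sequentially, first for $P^\perp_{\bm b}$, then lifting to $H_0$ via a tensor product with a Pauli, and finally building $H_1$ by composing $U_A$ with $U_{P^\perp_{\bm b}}$ and applying a standard Hermitianisation gadget. I would start from the identity
\[
P^\perp_{\bm b} = I_n - \ket{\bm b}\bra{\bm b} = \mathcal{O}_{\bm b}\bigl(I_n - \ket{0^n}\bra{0^n}\bigr)\mathcal{O}_{\bm b}^\dagger,
\]
which shows that it suffices to block-encode the trivial projector $Q := I_n - \ket{0^n}\bra{0^n}$ and conjugate by $\mathcal{O}_{\bm b}$. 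A one-ancilla exact block-encoding of $Q$ is obtained by a multi-controlled NOT that flips the ancilla from $\ket{0}$ to $\ket{1}$ exactly when the system is in $\ket{0^n}$; projecting the ancilla onto $\ket{0}$ then returns $Q$ on the system. Conjugating this circuit by $\mathcal{O}_{\bm b}^\dagger$ before and $\mathcal{O}_{\bm b}$ after on the system yields $U_{P^\perp_{\bm b}}$ as a $(1,1,0)$-block-encoding using exactly two calls to $\mathcal{O}_{\bm b}$.

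For $H_0$, I would observe that $H_0 = X \otimes P^\perp_{\bm b}$, where the Pauli $X$ acts on the extra qubit indexing the $2\times 2$ block structure. Since $X$ is itself a zero-ancilla block-encoding of $X$, applying $X$ on that qubit and running $U_{P^\perp_{\bm b}}$ on the remaining $n$-qubit register (keeping the same single ancilla) produces, after projection onto $\ket{0}$ of the ancilla, precisely $X \otimes P^\perp_{\bm b} = H_0$. This is a $(1,1,0)$-block-encoding, and its query cost to $\mathcal{O}_{\bm b}$ is inherited from $U_{P^\perp_{\bm b}}$ and thus $O(1)$.

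For $H_1$ I would proceed in two stages. Stage one forms a block-encoding $U_M$ of the (non-Hermitian) matrix $M := A P^\perp_{\bm b}$ as the composition $U_A \cdot U_{P^\perp_{\bm b}}$; the standard product rule for block-encodings multiplies normalisations and adds ancilla counts, giving an $(s\cdot 1,\,(m+1)+1,\,0)=(s,m+2,0)$-block-encoding of $M$ with $O(1)$ queries to $U_A$ and $\mathcal{O}_{\bm b}$. Stage two introduces one further ancilla $c$ and applies a Hermitianisation gadget: execute $U_M$ controlled on $\ket{0}_c$ and $U_M^\dagger$ controlled on $\ket{1}_c$, then apply $X$ to $c$. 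Projecting the full $(m+3)$-qubit ancilla register onto $\ket{0^{m+3}}$ yields the off-diagonal pattern
\[
\begin{pmatrix} 0 & M \\ M^\dagger & 0 \end{pmatrix}\!\bigg/\,s \;=\; H_1/s,
\]
so $U_{H_1}$ is an $(s,m+3,0)$-block-encoding with $O(1)$ queries to $U_A$ and $\mathcal{O}_{\bm b}$.

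The main obstacle, such as it is, lies in careful bookkeeping: I must verify that the Hermitianisation gadget produces the off-diagonal pattern with $M$ (not $M^\dagger$) in the upper-right block, and that the ancilla registers of $U_A$, $U_{P^\perp_{\bm b}}$, and the additional Hermitianisation qubit sit in mutually consistent positions so that each conjugation, tensor product, and composition gives an exact ($\varepsilon=0$) block-encoding with the advertised normalisation factor.
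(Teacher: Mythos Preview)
Your constructions for $P^\perp_{\bm b}$ and $H_0$ are essentially the paper's: the paper also writes $P^\perp_{\bm b}=\mathcal O_{\bm b}\,(I_n-\ket{0^n}\bra{0^n})\,\mathcal O_{\bm b}^{\dagger}$ (phrased via the explicit $2\times 2$ block unitary rather than a multi-controlled NOT, but the content is identical), and then uses $H_0=\sigma_x\otimes P^\perp_{\bm b}$ together with a SWAP to place the single ancilla correctly.

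For $H_1$ your route genuinely differs. The paper factors
\[
H_1=\begin{pmatrix} I_n & 0\\ 0 & P^\perp_{\bm b}\end{pmatrix}
\begin{pmatrix} 0 & A\\ A & 0\end{pmatrix}
\begin{pmatrix} I_n & 0\\ 0 & P^\perp_{\bm b}\end{pmatrix},
\]
block-encodes each Hermitian factor separately ($(1,1,0)$ for the outer pieces, $(s,m+1,0)$ for $\sigma_x\otimes A$), and multiplies via the product lemma to reach $(s,m+3,0)$. You instead form $M=AP^\perp_{\bm b}$ by a single product of block-encodings and then Hermitianise. Both are valid; your method is a bit more economical (once the bookkeeping is fixed it actually yields $(s,m+2,0)$, which implies the stated $(s,m+3,0)$) but uses that $A$ is Hermitian so that $M^\dagger=P^\perp_{\bm b}A$. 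The paper's symmetric factorisation avoids any Hermitianisation gadget and keeps every intermediate block-encoding Hermitian.

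One point to tighten: in your Stage~two the qubit $c$ is \emph{not} an ancilla to be projected out---it is the block-index qubit of the $(n{+}1)$-qubit system on which $H_1$ acts. If you include $c$ among the $m{+}3$ qubits you project onto $\ket{0^{m+3}}$, the result is an $N\times N$ operator, not the displayed $2\times 2$-block matrix. With $c$ correctly treated as a system qubit, your controlled-$U_M$/controlled-$U_M^\dagger$ construction followed by $X_c$ (with the controls chosen so that $M$, not $M^\dagger$, lands in the upper-right block, exactly the check you flag) gives an $(s,m+2,0)$-block-encoding of $H_1$, and the theorem's $(s,m+3,0)$ claim follows a fortiori.
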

\begin{proof}
Note that $H_0$ and $H_1$ are of dimension $2N$, that is, they are $(n+1)$-qubit operators.
A $(1,1,0)$-block-encoding of $P^{\perp}_{\bm b}$ can be constructed as
\begin{equation*}
U_{P^{\perp}_{\bm b}} = \left[
    \begin{array}{cc}
        P^{\perp}_{\bm b} & \ket{\bm b}\bra{\bm b} \\
        \ket{\bm b}\bra{\bm b} & P^{\perp}_{\bm b}
    \end{array}
    \right]
    = \left(I_1\otimes \mathcal{O}_{\bm b}\right) \left( \sigma_x \otimes \ket{\bm 0}\bra{\bm 0} + I_1 \otimes \left(I_n-\ket{\bm 0}\bra{\bm 0}\right) \right) \left(I_1\otimes \mathcal{O}_{\bm b}\right)
\end{equation*}
with two uses of $\mathcal{O}_{\bm b}$, where $\sigma_x$ is the Pauli-$X$ matrix. Since $H_0 = \sigma_x \otimes P^{\perp}_{\bm b}$, it is easy to verify that $H_0$ has a $(1,1,0)$-block-encoding $(SWAP\otimes I_n)(\sigma_x \otimes U_{P^{\perp}_{\bm b}})(SWAP\otimes I_n)$ satisfying
\begin{equation*}
    \begin{split}
        &\left(\bra{0}\otimes I_1\otimes I_n\right) (SWAP\otimes I_n)(\sigma_x \otimes U_{P^{\perp}_{\bm b}})(SWAP\otimes I_n) \left(\ket{0}\otimes I_1\otimes I_n\right) \\
        = & \left(I_1\otimes\bra{0}\otimes I_n\right) (\sigma_x \otimes U_{P^{\perp}_{\bm b}}) \left(I_1\otimes \ket{0} \otimes I_n\right) \\
        = & \sigma_x \otimes \left(\bra{0}\otimes I_n\right) U_{P^{\perp}_{\bm b}} \left( \ket{0} \otimes I_n\right) \\
        = & \sigma_x \otimes P^{\perp}_{\bm b} = H_0
    \end{split}
\end{equation*}
where $SWAP$ is a $2$-qubit unitary such that $SWAP\ket{0}\otimes I_1=I_1\otimes\ket{0}$.

$H_1$ can be factored as
\begin{equation*}
    H_1 = \left[
    \begin{array}{cc}
        I_n & 0 \\
        0 & P^{\perp}_{\bm b}
    \end{array}
    \right]
    \left[
    \begin{array}{cc}
        0 & A \\
        A & 0
    \end{array}
    \right]
    \left[
    \begin{array}{cc}
        I_n & 0 \\
        0 & P^{\perp}_{\bm b}
    \end{array}
    \right].
\end{equation*}
The first and third parts in the decomposition are the same and have a $(1,1,0)$-block-encoding
\begin{equation*}
    \left[
    \begin{array}{cccc}
        I_n & 0 & 0 & 0 \\
        0 & P^{\perp}_{\bm b} & \ket{\bm b}\bra{\bm b} & 0 \\
        0 & \ket{\bm b}\bra{\bm b} & P^{\perp}_{\bm b} & 0 \\
        0 & 0 & 0 & I_n
    \end{array}
    \right]
    = \left[
    \begin{array}{ccc}
        I_n &  &   \\
         & U_{P^{\perp}_{\bm b}} &  \\
         &  & I_n
    \end{array}
    \right].
\end{equation*}
The second part is $\sigma_x \otimes A$. Since $A$ has an $(s,m+1,0)$-block-encoding $U_A$ from Theorem \ref{th.05}, it is easy to verify that
\begin{equation*}
    \begin{split}
        &\left(\bra{0^{m+1}}\otimes I_1\otimes I_n\right) (SWAP\otimes I_n)(\sigma_x \otimes U_A)(SWAP\otimes I_n) \left(\ket{0^{m+1}}\otimes I_1\otimes I_n\right) \\
        = & \left(I_1\otimes\bra{0^{m+1}}\otimes I_n\right) (\sigma_x \otimes U_A) \left(I_1\otimes \ket{0^{m+1}} \otimes I_n\right) \\
        = & \sigma_x \otimes \left(\bra{0^{m+1}}\otimes I_n\right) U_A \left( \ket{0^{m+1}} \otimes I_n\right) = \frac{1}{s}\left(\sigma_x \otimes A\right)
    \end{split}
\end{equation*}
where $SWAP$ is an $m+2$-qubit unitary such that $SWAP\ket{0^{m+1}}\otimes I_1=I_1\otimes\ket{0^{m+1}}$. As a result, $\sigma_x \otimes A$ can be embedded in an $(s,m+1,0)$-block-encoding $(SWAP\otimes I_n)(\sigma_x \otimes U_A)(SWAP\otimes I_n)$.
By the product of block-encoded matrices in \cite[Lemma 53]{gilyen2018quantum}, we can construct an $(s,m+3,0)$-block-encoding $U_{H_1}$ of $H_1$ using $O(1)$ queries to $U_{A}$ and $\mathcal{O}_{\bm b}$.
\end{proof}

\subsection{QLSA based on filtering}
By means of the technique of eigenstate filtering developed by Lin and Tong \cite{lin2020optimal}, we solve the preconditioned linear system (\ref{eq.2.3.17}) with the output of an $\epsilon_L$-approximation of $\ket{\bm c} := \bm c/||\bm c||$.
The strategy is evolving adiabatic quantum computing (AQC) for proper time to prepare a quantum state with a low accuracy to the target $\ket{\bm c}$; then applying a filtering function to filter out the unwanted information to finally get an $\epsilon_L$-close $\ket{\bm c}$, which achieves the near optimal complexity.

Firstly, based on AQC prepare a quantum state with a low accuracy to $\ket{\bm c}$. As in \cite{subacsi2019quantum,an2022quantum,lin2020optimal}, define the family of Hamiltonians
\begin{equation}\label{eq.3.2.27}
    H(f) = (1-f)H_0 + f H_1,
\end{equation}
with $f\in[0,1]$ a scheduling function, and $H_0$, $H_1$ the initial- and final-state Hamiltonians defined in (\ref{eq.3.1.26}). $H(f)$ acts on a Hilbert space of dimension $2N$. From \cite{subacsi2019quantum,lin2020optimal}, the nullspace of $H(f)$ can be spanned by $\ket{1}\ket{\bm b}$ and $\ket{0}\ket{c(f)}$, where $\ket{0}\ket{c(f)}$ is a time-dependent $0$-eigenstate of $H(f)$; besides, the spectral gap separating $0$ from other eigenvalues of $H(f)$ can be lower bounded by $\Delta_\ast(f) \equiv 1-f+f/\kappa$ with $\kappa$ the condition number of $A$. If we initialize our quantum computer in $\ket{0}\ket{c(0)}$, as $f$ is increased from $0$ to $1$, the ground state of $H(f)$ continuously changes from $\ket{0}\ket{c(0)}=\ket{0}\ket{\bm b}$ to $\ket{0}\ket{c(1)}=\ket{0}\ket{\bm c}$. Since the Hamiltonian $H(f)$ does not allow for transitions between the two ground states, the adiabatic path will stay along $\ket{0}\ket{c(f)}$ as long as the initial state is $\ket{0}\ket{c(0)}$ \cite{subacsi2019quantum,lin2020optimal}.
Note that $H_0$ and $H_1$ have null space spanned by $\{\ket{0}\ket{\bm b},\ket{1}\ket{\bm b}\}$ and $\{\ket{0}\ket{\bm c}, \ket{1}\ket{\bm b}\}$ respectively, and the spectral gap of $H_1$ is $1/\kappa$.

A good choice of the scheduling function $f$ is an important issue, because it determines the runtime of AQC. Various scheduling functions have been studied to improve the time complexity of AQC \cite{an2022quantum, subacsi2019quantum}. Here we exploit the time-optimal scheduling function proposed in \cite{an2022quantum} as
\begin{equation*}
    f(v) = \frac{\kappa}{1-\kappa}\left[ 1 - \left(1+v(\kappa^{p-1}-1)^{\frac{1}{1-p}}\right) \right]
\end{equation*}
for $1<p<2$, where $f: [0,1] \rightarrow [0,1]$, $f(0)=0$ and $f(1)=1$. Instead of using this scheduling function to evolve AQC for $O(\kappa/\epsilon_L)$ time to achieve $\epsilon_L$-close $\ket{\bm c}$ \cite{an2022quantum}, we run AQC for $O(\kappa)$ time to circumvent $O(\epsilon_L)$ query complexity. The output state is denoted by $\ket{\varphi}$, $O(1)$-close to $\ket{\bm c}$.

In conclusion, $\ket{\varphi}$ can be generated as follows. Consider the adiabatic evolution
\begin{equation}\label{eq.3.2.28}
    \frac{1}{T} {\mathbf i} \partial_{v}\ket{\psi_T(v)} = H\left(f(v)\right)\ket{\psi_T(v)}, \quad \ket{\psi_T(0)} = \ket{0}\ket{\bm b},
\end{equation}
where $v=t/T \in [0,1]$, and the parameter $T$ is called the runtime of AQC.
Using the block encodings of $H_0$ and $H_1$ from Theorem \ref{th.06}, construct a block encoding of $H(f)$ for the time-dependent Hamiltonian simulation \cite{low2018hamiltonian}. Run the adiabatic evolution for $T=O(\kappa)$ time to get the state $\ket{\varphi}$ with $O(1)$ accuracy to $\ket{\bm c}$, which requires $\widetilde{O}\left(s\kappa\log(s\kappa)\right)$ query complexity to oracles $\mathcal{O}_{A_1}, \mathcal{O}_{A_2}, \mathcal{O}_{\bm b}$ \cite{an2022quantum}. Note that $\ket{\varphi}$ can be decomposed as
\begin{equation}\label{eq.3.2.29}
    \ket{\varphi} = \mu_0\ket{0}\ket{\bm c} + \mu_1\ket{1}\ket{\bm b} + \sqrt{1-\mu_0^2-\mu_1^2}\ket{\perp}
\end{equation}
with $\mu_0=\Omega(1)$ a constant and $\ket{\perp}$ orthogonal to the nullspace $\{\ket{0}\ket{\bm c}, \ket{1}\ket{\bm b}\}$ of $H_1$; besides, since the AQC ensures $\bra{1}\braket{\bm b|\psi_T(v)}=0$ for all evolution time, $\mu_1$ should come entirely from the error of the Hamiltonian simulation.

Then we apply a filtering function to filter out the unwanted information from all other eigenstates of $\ket{\varphi}$, and obtain an $\epsilon_L$-approximation of the target state $\ket{\bm c}$.
By means of the Chebyshev polynomials, the eigenstate filtering function is defined as a $2\ell$-degree polynomial
\begin{equation}\label{eq.3.2.30}
    R_\ell \left(x; \Delta_\ast\right) := \frac{T_\ell (-1+2\frac{x^2-\Delta_\ast^2}{1-\Delta_\ast^2})}{T_\ell (-1+2\frac{-\Delta_\ast^2}{1-\Delta_\ast^2})},
\end{equation}
where $T_\ell$ is the $\ell$th Chebyshev polynomial of the first kind, and $\Delta_\ast$ denotes the spectral gap of a Hamiltonian in general. The nice properties of $R_\ell \left(x; \Delta_\ast\right)$ are shown as follows.

\begin{theorem}{\rm \cite{saad2003iterative,lin2020optimal}} \label{th.07}
{\rm (I)} $R_\ell(x,\Delta_\ast)$ solves the minimax problem
$$
\min\limits_{p(x)\in\mathbb{P}_{2\ell}, p(0)=1} \max\limits_{x\in D_{\Delta_\ast}} |p(x)|,
$$
where $\mathbb{P}_{2\ell}$ represents the polynomials of degree-$2\ell$ and $D_{\Delta_\ast} := [-1,-\Delta_\ast]\cup[\Delta_\ast, 1]$. \\
{\rm (II)} $|R_\ell \left(x; \Delta_\ast\right)|\leq 2e^{-\sqrt{2} \ell \Delta_\ast}$
for all $x\in D_{\Delta_\ast}$ and $0<\Delta_\ast\leq1/\sqrt{12}$. \\
{\rm (III)} $R_\ell(0;\Delta_\ast)=1$. $|R_\ell \left(x; \Delta_\ast\right)|\leq 0$ for all $|x|\leq 1$.
\end{theorem}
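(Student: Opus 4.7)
My plan is to reduce all three parts to classical properties of Chebyshev polynomials via the substitution
\[
y(x) := -1 + 2\frac{x^2-\Delta_\ast^2}{1-\Delta_\ast^2},
\]
which is even in $x$, maps $D_{\Delta_\ast}=[-1,-\Delta_\ast]\cup[\Delta_\ast,1]$ bijectively onto $[-1,1]$, and sends $x=0$ to
\[
y_0 := -\frac{1+\Delta_\ast^2}{1-\Delta_\ast^2} \in (-\infty,-1).
\]
With this substitution $R_\ell(x;\Delta_\ast) = T_\ell(y(x))/T_\ell(y_0)$, so each claim about $R_\ell$ can be translated into a claim about $T_\ell$ on $[-1,1]$ with prescribed value at the external point $y_0$.

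For Part (I), I would first argue that the minimizer can be taken even. Since $D_{\Delta_\ast}$ is symmetric about $0$ and $p(0)$ depends only on the even part of $p$, the symmetrization $\tilde p(x):=(p(x)+p(-x))/2$ of any feasible $p\in\mathbb{P}_{2\ell}$ remains feasible and does not increase $\max_{x\in D_{\Delta_\ast}}|p(x)|$. Thus we may write $p(x)=q(x^2)$ with $\deg q\leq\ell$, and further $q(x^2)=\tilde q(y(x))$ for some $\tilde q\in\mathbb{P}_\ell$ with $\tilde q(y_0)=1$. The problem becomes the classical Chebyshev problem of minimizing $\max_{y\in[-1,1]}|\tilde q(y)|$ over $\tilde q\in\mathbb{P}_\ell$ with $\tilde q(y_0)=1$, whose unique solution is $T_\ell(y)/T_\ell(y_0)$ by the Chebyshev alternation theorem (the scaled Chebyshev polynomial equioscillates $\ell+1$ times on $[-1,1]$, so no other degree-$\ell$ polynomial satisfying the same normalization can be strictly smaller in sup-norm). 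Pulling back via $y\mapsto x$ recovers $R_\ell$.

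For Part (II), I would explicitly lower-bound $|T_\ell(y_0)|$. Using $T_\ell(-z)=(-1)^\ell T_\ell(z)$ together with $T_\ell(\cosh\theta)=\cosh(\ell\theta)$ and $\theta=\operatorname{arccosh}|y_0|$, a short computation gives
\[
|y_0|+\sqrt{y_0^2-1} \;=\; \frac{(1+\Delta_\ast)^2}{1-\Delta_\ast^2} \;=\; \frac{1+\Delta_\ast}{1-\Delta_\ast},
\]
hence $\theta = \ln\!\bigl((1+\Delta_\ast)/(1-\Delta_\ast)\bigr) = 2\operatorname{arctanh}(\Delta_\ast) \geq 2\Delta_\ast$. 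Therefore $|T_\ell(y_0)|=\cosh(\ell\theta)\geq\tfrac12 e^{2\ell\Delta_\ast}\geq\tfrac12 e^{\sqrt 2\ell\Delta_\ast}$. Since $y(x)\in[-1,1]$ for $x\in D_{\Delta_\ast}$, we have $|T_\ell(y(x))|\leq 1$, so $|R_\ell(x;\Delta_\ast)|\leq 1/|T_\ell(y_0)|\leq 2e^{-\sqrt 2\ell\Delta_\ast}$. The hypothesis $\Delta_\ast\leq 1/\sqrt{12}$ is only needed to place us in the regime where this exponential bound is the relevant one (i.e., $\sqrt{2}\ell\Delta_\ast$ is the correct decay rate rather than a constant).

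For Part (III), $R_\ell(0;\Delta_\ast)=T_\ell(y_0)/T_\ell(y_0)=1$ by construction. The bound $|R_\ell(x;\Delta_\ast)|\leq 1$ for $|x|\leq 1$ (which I read as the intended statement in place of the evident typo ``$\leq 0$'') splits into two cases: for $x\in D_{\Delta_\ast}$ we have $y(x)\in[-1,1]$, so $|T_\ell(y(x))|\leq 1\leq|T_\ell(y_0)|$; for $|x|<\Delta_\ast$ we have $y(x)\in(y_0,-1)$, and since $|T_\ell|$ is monotone on $(-\infty,-1]$, $|T_\ell(y(x))|\leq|T_\ell(y_0)|$. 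The main technical obstacle I expect is cleanly justifying the reduction to even polynomials and the Chebyshev-alternation characterization in Part (I); the rest is elementary computation in the hyperbolic parametrization of $T_\ell$.
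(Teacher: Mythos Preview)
The paper does not provide its own proof of this theorem; it is stated with citations to \cite{saad2003iterative,lin2020optimal} and used as a black box. Your argument is correct and is essentially the standard one found in those references: the affine substitution $y(x)=-1+2(x^2-\Delta_\ast^2)/(1-\Delta_\ast^2)$ reduces Part~(I) to the classical Chebyshev minimax characterization, and Parts~(II)--(III) follow from the hyperbolic representation $T_\ell(\pm\cosh\theta)=(\pm1)^\ell\cosh(\ell\theta)$.

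One remark on Part~(II): your computation in fact yields the sharper bound $|R_\ell(x;\Delta_\ast)|\leq 2e^{-2\ell\Delta_\ast}$ for all $\Delta_\ast\in(0,1)$, which trivially implies the stated $2e^{-\sqrt{2}\ell\Delta_\ast}$ without any restriction on $\Delta_\ast$. The constant $\sqrt{2}$ and the hypothesis $\Delta_\ast\leq 1/\sqrt{12}$ in the statement stem from a coarser estimate in \cite{lin2020optimal} (bounding $\operatorname{arccosh}|y_0|$ via a Taylor-type inequality rather than the exact identity $\operatorname{arccosh}|y_0|=2\operatorname{arctanh}\Delta_\ast$ you used), so your explanation of why that hypothesis is needed is not quite right---it is simply an artifact of a looser bound in the cited source, not a genuine regime restriction. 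Your reading of ``$\leq 0$'' as a typo for ``$\leq 1$'' in Part~(III) is also correct.
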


If the degree in the polynomial $R_\ell$ takes $\ell=\Omega(s\kappa\log(1/\epsilon_L))$, by multiplying $R_\ell(H_1/s;1/(s\kappa))$ to $\ket{\varphi}$,
we find that the $\ket{\perp}$ component of $\ket{\varphi}$ can be filtered out in the sense of $\epsilon_L$ accuracy as
\begin{equation*}
    \begin{split}
        R_\ell(H_1/s;1/(s\kappa))\ket{\varphi}
        & = R_\ell(H_1/s;1/(s\kappa))\left(\mu_0\ket{0}\ket{\bm c} + \mu_1\ket{1}\ket{\bm b} +\sqrt{1-\mu_0^2-\mu_1^2}\ket{\perp}\right) \\
        & = R_\ell(0;1/(s\kappa))\left(\mu_0\ket{0}\ket{\bm c} + \mu_1\ket{1}\ket{\bm b}\right) + \sqrt{1-\mu_0^2-\mu_1^2} R_\ell(H_1/s;1/(s\kappa))\ket{\perp} \\
        & = \mu_0\ket{0}\ket{\bm c} + \mu_1\ket{1}\ket{\bm b} + O(2e^{-\sqrt{2}\ell/(s\kappa)}) \\
        & = \mu_0\ket{0}\ket{\bm c} + \mu_1\ket{1}\ket{\bm b} + O(\epsilon_L).
    \end{split}
\end{equation*}
The second equality holds since $\ket{0}\ket{\bm c}$ and $\ket{1}\ket{\bm b}$ correspond to the $0$-eigenvalue of $H_1$ and $R_\ell(0;1/(s\kappa))=1$. The third equality is due to the fact that the other eigenvalues of $H_1/s$ lie in the range $D_{1/(s\kappa)}\equiv[-1,-1/(s\kappa)]\cup[1/(s\kappa),1]$, and by Theorem \ref{th.07} we have
\begin{equation*}
    ||R_\ell(H_1/s;1/(s\kappa))\ket{\perp}|| \leq 2e^{-\sqrt{2}\ell/(s\kappa)}.
\end{equation*}
In order to make $2e^{-\sqrt{2}\ell/(s\kappa)} \leq \epsilon_L$ it suffices to choose $\ell=\Omega(s\kappa\log(1/\epsilon_L))$,
and thus the last equality holds.

From Theorem \ref{th.06}, we can construct an $(s,m+3,0)$-block-encoding $U_{H_1}$ of $H_1$. Quantum eigenstate transformation via QSP then allows us to implement a $(1,m+4, 0)$-block-encoding $U_{R_\ell}$ of $R_\ell(H_1/s;1/(s\kappa))$ with $2\ell$ uses of $U_{H_1}$, $U_{H_1}^\dagger$ based on Theorem \ref{lemma.05}. Apply the unitary $U_{R_\ell}$ to the state $\ket{0^{m+4}}\ket{\varphi}$ to get
\begin{equation}\label{eq.5.2.42}
    \begin{split}
    U_{R_\ell} \ket{0^{m+4}}\ket{\varphi}
    & =\ket{0^{m+4}}\left(R_\ell(H_1/s;1/(s\kappa))\ket{\varphi} \right) + \ket{\perp} \\
    & = \ket{0^{m+4}}\left( \mu_0\ket{0}\ket{\bm c} + \mu_1\ket{1}\ket{\bm b} + \epsilon_L \right) + \ket{\perp}
    \end{split}
\end{equation}
with $(\bra{0^{m+4}}\otimes I_{n+1})\ket{\perp}=0$.

Finally, measure the ancilla qubits with outcome $\ket{0^{m+5}}$ to obtain a quantum state  $\epsilon_L$-close to $\ket{\bm c}$, with probability $|\mu_0|^2=\Omega(1)$. For a high success probability, $O(1)$ repetitions are sufficient.
Figure \ref{Linear_system} shows the quantum circuit of QLSA based on filtering to solve the preconditioned linear system $A\bm c =\bm b$ in (\ref{eq.2.3.17}). Putting all together, the total query complexity to output $\ket{\bm c}$ up to accuracy $\epsilon_L$ with high success probability is
\begin{equation}\label{eq.3.2.32}
\widetilde{O}\left(s\kappa\log(1/\epsilon_L)\right),
\end{equation}
and the time complexity is the same up to a ${\rm poly}\log(N)$ factor.

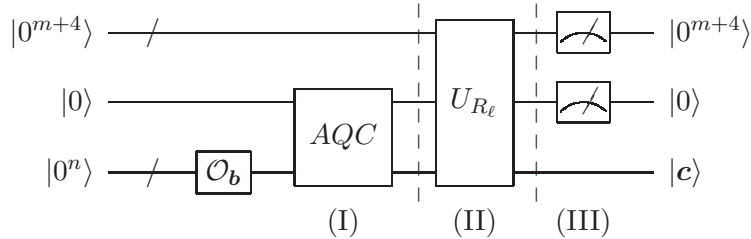
\begin{figure}[H]
    \centerline{
        \Qcircuit @C=1.5em @R=1em{
            \lstick{\ket{0^{m+4}}} & {/} \qw & \qw & \qw \barrier[-1.9em]{2} &\multigate{2}{U_{R_\ell}} \barrier[-1.6em]{2} & \meter & \rstick{\ket{0^{m+4}}} \qw \\
            \lstick{\ket{0}} & \qw & \qw & \multigate{1}{AQC} & \ghost{U_{R_\ell}} & \meter & \rstick{\ket{0}} \qw \\
            \lstick{\ket{0^n}} & {/} \qw & \gate{\mathcal{O}_{\bm b}} & \ghost{AQC} & \ghost{U_{R_\ell}} & \qw & \rstick{\ket{\bm c}} \qw \\
            &  &  & \rm{(I)} & \rm{(II)} & \rm{(III)} & \\
        }
    }
    \caption{Quantum circuit of filtering-based QLSA solving $A\bm c=\bm b$ in three stages. (I) Implement AQC for $O(\kappa)$ time to prepare $\ket{\varphi}$ in (\ref{eq.3.2.29}). (II) Apply the $(1,m+4, 0)$-block-encoding $U_{R_\ell}$. (III) Measure the first and second registers in states $\ket{0^{m+4}}$ and $\ket{0}$ to get an $\epsilon_L$-close $\ket{\bm c}$.}
    \label{Linear_system}
\end{figure}

\subsection{Numerical solution}
By means of the prepared state $\epsilon_L$-close to $\ket{\bm c}$, we can approximate the quantum state that encodes the numerical solution of the Poisson equation $\bar u(\bm x)$ in (\ref{eq.2.2.9}), i.e.,
\begin{equation*}
\bar u(\bm x) = -\sum\limits_{j=1}^{N_\mathcal{I}}c_j\Delta\Phi_\delta(\bm x-\bm x_j) + \sum\limits_{j=N_\mathcal{I}+1}^Nc_j\Phi_\delta(\bm x-\bm x_j)
\end{equation*}
on the set of collocation points $\mathcal{X}=\{\bm x_1,\dots,\bm x_N\}$. Define an $N\times N$ matrix
\begin{equation*}
M^\diamond := \left[
\begin{array}{cccccc}
    -\Delta\Phi_\delta(\bm x_1 - \bm x_1) & \cdots & -\Delta\Phi_\delta(\bm x_1 - \bm x_{N_{\mathcal I}}) & \Phi_\delta(\bm x_1 - \bm x_{N_{\mathcal I + 1}}) & \cdots & \Phi_\delta(\bm x_1 - \bm x_N) \\
    -\Delta\Phi_\delta(\bm x_2 - \bm x_1) & \cdots & -\Delta\Phi_\delta(\bm x_2 - \bm x_{N_{\mathcal I}}) & \Phi_\delta(\bm x_2 - \bm x_{N_{\mathcal I + 1}}) & \cdots & \Phi_\delta(\bm x_2 - \bm x_N) \\
    \vdots & \cdots & \vdots & \vdots & \cdots & \vdots \\
    -\Delta\Phi_\delta(\bm x_N - \bm x_1) & \cdots & -\Delta\Phi_\delta(\bm x_N - \bm x_{N_{\mathcal I}}) & \Phi_\delta(\bm x_N - \bm x_{N_{\mathcal I + 1}}) & \cdots & \Phi_\delta(\bm x_N - \bm x_N) \\
\end{array}
\right].
\end{equation*}
The numerical solution $\bar {\bm u} = [\bar u(\bm x_1),\dots,\bar u(\bm x_N)]^T$ on the set $\mathcal X$ has the representation
\begin{equation*}
    \bar {\bm u}=M^{\diamond}\bm c^\diamond = M^\diamond\mathcal{P}\bm c,
\end{equation*}
where $\bm c^\diamond=[c_1,\dots,c_N]^T$ in (\ref{eq.2.2.13}), and $\bm c^\diamond = \mathcal{P}\bm c$ with $\mathcal{P}={\rm diag}(\delta^2,\dots,\delta^2,1,\dots,1)$ a diagonal matrix from (\ref{eq.2.3.18}). By the expressions of $\Phi_\delta$ and $\Delta\Phi_\delta$ from (\ref{eq.2.1.7}) and Remark \ref{rem.01}, $M^\diamond$ can be further represented in the abbreviated notation as
\begin{equation*}
    M^\diamond = \left[ -\Delta\Phi_\delta(\bm x_i-\bm x_j), \Phi_\delta(\bm x_i - \bm x_j)\right] = \delta^{-d}\left[-\delta^{-2}F_1(r_{ij}/\delta), \phi(r_{ij}/\delta)\right]
\end{equation*}
analogous to that of $A^\diamond$ in (\ref{eq.2.2.15}), with $r_{ij}=||\bm x_i-\bm x_j||$. Note that $M^\diamond$ is also $s$-sparse depending on the pairwise distances in $\mathcal{X}$ and the support radius $\delta$.
Furthermore, define the $N\times N$ matrix
\begin{equation}\label{eq.3.3.33}
M := M^\diamond\mathcal{P} = \delta^{-d}\left[-F_1(r_{ij}/\delta), \phi(r_{ij}/\delta)\right]
\end{equation}
in the abbreviated notation, which is also $s$-sparse. The quantum state of $\bar {\bm u}$ is given by
\begin{equation}\label{eq.3.3.34}
\ket{\bar {\bm u}} := \frac{\bar{\bm u}}{||\bar{\bm u}||} = \frac{M\ket{\bm c}}{||M\ket{\bm c}||}.
\end{equation}
From \cite[Theorem 2.1]{duan2013stability}, some theoretical results with respect to $M$ can be concluded as follows, and the detailed proof is given in Appendix B.

\begin{theorem} \label{th.08}
Let $q\in (0,1]$ be the separation distance of $\mathcal X$ defined in (\ref{eq.2.2}). For the $s$-sparse matrix $M$ in (\ref{eq.3.3.33}), we have $||M\bm\xi||\geq C q^{-d} (q/\delta)^{2\tau}||\bm\xi||$ for any $\bm\xi\in\mathbb{R}^{N}$, and the condition number ${\rm cond}(M)\leq C s (\delta/q)^{2\tau-d}$, where $C$ is a positive constant independent of $\mathcal X$.
\end{theorem}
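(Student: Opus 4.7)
The plan is to interpret the action of $M$ as point-evaluation of a function in the native space $\mathcal N_{\Phi_\delta}(\mathbb R^d)=H^\tau(\mathbb R^d)$, invoke a hybrid-kernel stability estimate, and combine it with a sparsity-based upper bound on $\|M\|$.

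First, for an arbitrary $\bm\xi=(\bm\xi_{\mathcal I},\bm\xi_{\mathcal B})\in\mathbb R^N$ I would introduce
\[
v(\bm y):=-\sum_{j=1}^{N_{\mathcal I}}\xi_j\,\Delta\Phi_\delta(\bm y-\bm x_j)+\sum_{j=N_{\mathcal I}+1}^{N}\xi_j\,\Phi_\delta(\bm y-\bm x_j).
\]
Comparing with (\ref{eq.3.3.33}) gives $(M\bm\xi)_i=v(\bm x_i)$, and hence $\|M\bm\xi\|^2=\sum_{i=1}^N|v(\bm x_i)|^2$. By Remark~\ref{rem.01} both generators $-\Delta\Phi_\delta$ and $\Phi_\delta$ are positive-definite radial functions, and by Lemma~\ref{lemma.01} with $\widehat{\Phi_\delta}(\bm\omega)=\widehat\Phi(\delta\bm\omega)$ their Fourier transforms inherit the algebraic decay in (\ref{eq.2.1.6}), so $v\in H^\tau(\mathbb R^d)$.

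Next I would invoke the mixed-generator stability estimate of \cite[Theorem~2.1]{duan2013stability} for this hybrid system. Its proof builds an auxiliary bump supported in a frequency window of width $\sim 1/q$ on which both $\widehat{\Phi_\delta}(\bm\omega)$ and $\|\bm\omega\|^2\widehat{\Phi_\delta}(\bm\omega)$ admit simultaneous uniform lower bounds, then applies Plancherel together with a Poisson-summation / sampling estimate controlled by the separation distance $q$. Tracking the $\delta$-scaling (which enters only through $\widehat{\Phi_\delta}(\bm\omega)=\widehat\Phi(\delta\bm\omega)$) then yields
\[
\|M\bm\xi\|\geq C\,q^{-d}(q/\delta)^{2\tau}\,\|\bm\xi\|,
\]
the first claim. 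Sanity check: this is precisely the standard lower bound for the smallest eigenvalue of the scaled kernel matrix of $\Phi_\delta$ alone, consistent with the idea that the hybrid system should not be less stable than either of its two positive-definite constituents.

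For the condition number I would bound $\|M\|$ from above via sparsity. Each row of $M$ contains at most $s$ nonzeros, because both $\phi$ and $F_1$ are supported in $[0,1]$ and so $\Phi_\delta(\bm y-\bm x_j)$ and $\Delta\Phi_\delta(\bm y-\bm x_j)$ vanish for $\|\bm y-\bm x_j\|>\delta$, giving the same sparsity count $s\leq(1+\delta/q)^d$ as in Theorem~\ref{th.03}. Each nonzero entry is at most $C\delta^{-d}$ by the prefactor in (\ref{eq.3.3.33}) together with the uniform boundedness of $\phi,F_1$ on $[0,1]$. Hence $\|M\|\leq\sqrt{\|M\|_1\|M\|_\infty}\leq Cs\delta^{-d}$, and dividing by the previous lower bound on $\sigma_{\min}(M)$ gives ${\rm cond}(M)\leq Cs(\delta/q)^{2\tau-d}$. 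The main obstacle is the hybrid-kernel lower bound, which must rule out destructive cancellation between the $-\Delta\Phi_\delta$- and $\Phi_\delta$-columns; following \cite{duan2013stability} this is handled by carrying both kernels simultaneously through the Fourier-side construction rather than treating the interior and boundary blocks separately.
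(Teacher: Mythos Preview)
Your proposal is correct and follows essentially the same route as the paper: invoke \cite[Theorem~2.1]{duan2013stability} for the hybrid-kernel lower bound and then use the sparsity-based estimate $\|M\|\leq\sqrt{\|M\|_1\|M\|_\infty}\leq Cs\delta^{-d}$ for the upper bound. The only cosmetic difference is that the paper handles the $\delta$-dependence by the substitution $\bm z=\bm x/\delta$ (so that $M=\delta^{-d}[-\Delta\Phi(\bm z_i-\bm z_j),\Phi(\bm z_i-\bm z_j)]$ with rescaled separation distance $q/\delta$) and then applies the cited theorem directly, whereas you propose to track the scaling through $\widehat{\Phi_\delta}(\bm\omega)=\widehat\Phi(\delta\bm\omega)$; these are equivalent bookkeeping choices.
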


To approximate $\ket{\bar{\bm u}}$, consider block encodings to embed $M$ into a larger unitary. Since $-\Delta \Phi$ and $\Phi$ are positive definite from Remark \ref{rem.01},  the maximum element in absolute value of $M$ is $C \delta^{-d}$ by Theorem \ref{th.01}, with $C=\max\{-\Delta\Phi(\bm 0), \Phi(\bm 0)\}$ a positive constant. Define a rescaled version of $M$ as
\begin{equation}\label{eq.3.3.35}
\hat M := M/(C\delta^{-d})
\end{equation}
such that each element of $\hat M$ has absolute value not greater than 1 and the block embedding can be achieved. Exploit the standard matrix dilation method to extend $\hat{M}$ as a Hermitian matrix of dimension $2N$, i.e.,
\begin{equation*}
    \mathcal M := \left[
\begin{array}{cc}
    0 & \hat M \\
    \hat M^\dagger & 0 \\
\end{array}
\right].
\end{equation*}
Similar to the technique of encoding $A$ as a block inside a larger unitary in Theorem \ref{th.05}, assume oracles that have access to the nonzero entries of $\mathcal M$. Then we can perform an $(s,m+1,0)$-block-encoding $U_{\mathcal M}$ of $\mathcal M$ with $O(1)$ uses of the oracles in time $O({\rm poly}\log N)$, where $m=\log(s)$. Apply the unitary $U_{\mathcal M}$ to the state $\ket{0^{m+1}}\ket{1}\ket{\bm c}$ to yield
\begin{equation*}
    \begin{split}
    U_{\mathcal M}\ket{0^{m+1}}\ket{1}\ket{\bm c}
    & = \ket{0^{m+1}} \left(\frac{1}{s}\mathcal M \ket{1}\ket{\bm c}\right) + \ket{\perp} \\
    & = \ket{0^{m+1}}\ket{0} \left(\frac{1}{s}\hat M \ket{\bm c}\right) + \ket{\perp} \\
    \end{split}
\end{equation*}
with $(\bra{0^{m+1}}\otimes I_{n+1})\ket{\perp}=0$. Finally, measure the ancilla qubits to get the quantum state $\ket{\bar{\bm u}}$ in (\ref{eq.3.3.34}), with probability $\left(||\hat M \ket{\bm c}|| /s\right)^2$. Figure \ref{Numerical_solution} shows the related quantum circuit.
\begin{figure}[H]
    \centerline{
        \Qcircuit @C=1.5em @R=1em{
            \lstick{\ket{0^{m+1}}} & {/} \qw & \qw  &\multigate{2}{U_{\mathcal M}} & \meter & \rstick{\ket{0^{m+1}}} \qw \\
            \lstick{\ket{0}} & \qw & \gate{\sigma_x} & \ghost{U_{\mathcal M}} & \meter & \rstick{\ket{0}} \qw \\
            \lstick{\ket{\bm c}} & {/} \qw & \qw & \ghost{U_{\mathcal M}} & \qw & \rstick{\ket{\bar{\bm u}}} \qw \\
        }
    }
    \caption{Quantum circuit to generate $\ket{\bar{\bm u}}$, where $\sigma_x$ is the Pauli-$X$ matrix.}
    \label{Numerical_solution}
\end{figure}
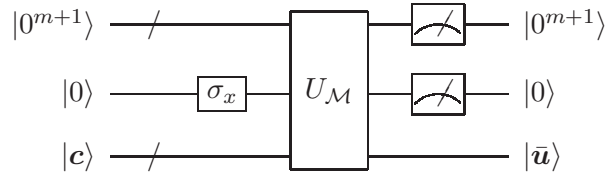

Theorem \ref{th.08} ensures that $||M \ket{\bm c}||\geq q^{-d} (q/\delta)^{2\tau}$. We can immediately derive that
\begin{equation*}
    \frac{1}{s}||\hat M \ket{\bm c}||
    = \frac{1}{C \delta^{-d} s} ||M \ket{\bm c}||
    \geq \frac{1}{C s} \left(\frac{q}{\delta}\right)^{2\tau-d}.
\end{equation*}
Using amplitude amplification \cite{brassard2002quantum}, $O\left(s(\delta/q)^{2\tau-d}\right)$ repetitions are sufficient for success. Taking into account $\widetilde{O}\left(s\kappa\log(1/\epsilon_L) {\rm poly}(\log N)\right)$ time complexity from (\ref{eq.3.2.32}) to prepare $\ket{\bm c}$ with error $\epsilon_L$, and $O({\rm poly}\log N)$ time to implement $U_{\mathcal M}$, the overall time complexity to approximate $\ket{\bar{\bm u}}$ with high probability is
\begin{equation}\label{eq.3.3.36}
    \widetilde{O}\left((\delta/q)^{2\tau-d}s^2\kappa\log(1/\epsilon_L) {\rm poly}(\log N)\right).
\end{equation}

\subsection{Error analysis}
Combining the derived results above, we perform an overall error analysis and complexity estimate for the quantum RBF method solving the Poisson problem that is defined in Problem \ref{problem 1} in Section 2. With the assumption that the collocation points have quasi-uniform distribution satisfying (\ref{eq.2.3}) in Problem \ref{problem 1}, the number of collocation points $N$ will grow as
\begin{equation}\label{eq.3.4.37}
    N = O\left(h^{-d}\right),
\end{equation}
where $h$ is the mesh norm defined in (\ref{eq.2.1}). And the separation distance $q$ can be roughly regard as $q\approx h$. The goal is to output an $\epsilon$-close quantum state that encodes the solution of the Poisson equation on the set $\mathcal X$ of collocation points, where $\epsilon$ is the desired precision as algorithm input.

There are two types of errors to be considered in the analysis: the RBF error and the quantum algorithm error. For analyzing simplicity, let $\ket{\bm u} := \bm u/||\bm u||$ be the exact quantum state proportional to $\bm u=[u(\bm x_1),\dots,u(\bm x_N)]^T$ of the exact solution of the Poisson equation on $\mathcal X$. Let $\ket{\widetilde{\bm u}}$ be the actual output state from quantum procedure. Let $\ket{\widetilde{\bm c}}$ be the actual output state from QLSA in Subsection 3.2. Then $\ket{\widetilde{\bm u}}$ can be represented as $\ket{\widetilde {\bm u}} = M\ket{\widetilde{\bm c}}/||M\ket{\widetilde{\bm c}}||$, analogous to the definition of the numerical solution $\ket{\bar{\bm u}}$ on $\mathcal X$ in (\ref{eq.3.3.34}).
The total error of approximating $\ket{\bm u}$ by $\ket{\widetilde{\bm u}}$ is bounded by
\begin{equation}\label{eq.3.4.38}
||\ket{\bm u}-\ket{\widetilde{\bm u}}|| \leq ||\ket{\bm u}-\ket{\bar{\bm u}}|| + ||\ket{\bar{\bm u}}-\ket{\widetilde{\bm u}}||.
\end{equation}

Firstly, concerning the RBF error $||\ket{\bm u}-\ket{\bar{\bm u}}||$, we have
\begin{equation}\label{eq.3.4.39}
||\ket{\bm u}-\ket{\bar{\bm u}}||
\leq 2 \frac{||\bm u-\bar{\bm u}||}{||{\bm u}||}
\leq C \frac{||u-\bar{u}||_{L_2(\Omega)}}{||u||_{L_2(\Omega)}}
\leq C \frac{||u||_{H^\tau(\Omega)}}{||u||_{L_2(\Omega)}} \left(\frac{h^{1-2/\tau}}{\delta}\right)^{\tau}
\end{equation}
where $C>0$ is a constant.
Note that the boldface denotes vectors, italics denote functions, and kets denote quantum states.
The first inequality can be easily derived by the property of triangle inequality of vector norm. The second is based on the $L_2$-norm discretization \cite{freeman2023discretizing} from which we have
\begin{equation*}
    c_1 ||u||^2_{L_2(\Omega)} \leq \frac{1}{N} ||\bm u||^2 \leq c_2 ||u||^2_{L_2(\Omega)}
\end{equation*}
with $c_1 \leq c_2$ positive constants. And the third inequality applies Theorem \ref{th.02}.
For a given Poisson problem, $||u||_{H^\tau(\Omega)}$ and $||u||_{L_2(\Omega)}$ should be constants independent of the data points. To ensure $||\ket{\bm u}-\ket{\bar{\bm u}}||\leq \epsilon/2$, we can choose
\begin{equation}\label{eq.3.4.40}
\left(h^{1-2/\tau}/\delta\right)^\tau=O\left(\epsilon\right).
\end{equation}

Next, take into account the error $||\ket{\bar{\bm u}}-\ket{\widetilde{\bm u}}||$ induced by inaccuracies from QLSA solving the linear system (\ref{eq.2.3.17}). Since $||\ket{\bm c}-\ket{\widetilde{\bm c}}||\leq \epsilon_L$ from Subsection 3.2, it is easy to verify that
\begin{equation}\label{eq.3.4.41}
\begin{split}
& ||\ket{\bar{\bm u}}-\ket{\widetilde{\bm u}}||
= \bigg|\bigg|\frac{M\ket{\bm c}}{||M\ket{\bm c}||} - \frac{M\ket{\widetilde{\bm c}}}{||M\ket{\widetilde{\bm c}}||} \bigg|\bigg| \\
& \leq \bigg|\bigg|\frac{M\ket{\bm c}}{||M\ket{\bm c}||} - \frac{M\ket{\widetilde{\bm c}}}{||M\ket{\bm c}||} \bigg|\bigg| + \bigg|\bigg|\frac{M\ket{\widetilde{\bm c}}}{||M\ket{\bm c}||} - \frac{M\ket{\widetilde{\bm c}}}{||M\ket{\widetilde{\bm c}}||} \bigg|\bigg| \\
& \leq 2\epsilon_L \frac{||M||}{||M\ket{\bm c}||}  \leq 2 {\rm cond}(M) \cdot \epsilon_L,
\end{split}
\end{equation}
where $M$ is defined in (\ref{eq.3.3.33}) and the condition number ${\rm cond}(M)$ can be bounded from Theorem \ref{th.08}.
To make $||\ket{\bar{\bm u}}-\ket{\widetilde{\bm u}}||\leq\epsilon/2$, it suffices to choose the linear system error
\begin{equation}\label{eq.3.4.42}
\epsilon_L=O\left(\epsilon/{\rm cond}(M)\right).
\end{equation}

In conclusion, with the relationship (\ref{eq.3.4.40}) and the linear system error $\epsilon_L$ in (\ref{eq.3.4.42}), we can approximate $\ket{\bm u}$ by $\ket{\widetilde{\bm u}}$ up to accuracy $\epsilon$. The runtime is of the form (\ref{eq.3.3.36}). Putting all together, the main result that solves Problem \ref{problem 1} is summarized as follows.

\begin{theorem}\label{th.main}
Assume the support radius $\delta=C h^{1-\beta/\tau}$ in $(0,1]$, where $C$ is a positive constant, $\beta>2$ is a constant determining the RBF error in (\ref{eq.3.4.39}), $h$ is the sufficiently small mesh norm defined by (\ref{eq.2.1}), and $\tau=d/2+k+1/2$ indicates the order of smoothness of the Sobolev space generated by $\Phi$ in (\ref{eq.2.2.10}).
Given Problem \ref{problem 1} in Section 2, there exists an RBF quantum algorithm that outputs an $\epsilon$-close state $\ket{\bm u}$ encoding the solution of the Poisson equation (\ref{eq.2.2.8}) at the collocation points, with runtime
\begin{equation}\label{eq.3.4.43}
    \widetilde{O}\left(\epsilon^{-\frac{\beta}{\beta-2}\left(4+\frac{d}{\tau}\right)}\right).
\end{equation}
\end{theorem}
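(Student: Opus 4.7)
The plan is to propagate the two error budgets (RBF discretization error and quantum linear system error) through the bound (\ref{eq.3.4.38}), then express every data-dependent quantity ($h$, $q$, $\delta/q$, $s$, $\kappa$, ${\rm cond}(M)$, $N$, $\epsilon_L$) purely in terms of the input precision $\epsilon$, and finally substitute into the runtime (\ref{eq.3.3.36}) to extract the stated exponent.

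First I would handle the RBF error. Plugging $\delta = Ch^{1-\beta/\tau}$ into (\ref{eq.3.4.39}) gives $(h^{1-2/\tau}/\delta)^\tau = h^{\tau-2}/\delta^\tau = C^{-\tau} h^{\beta-2}$, so requiring the RBF error to be at most $\epsilon/2$ forces the mesh norm to scale as $h = O(\epsilon^{1/(\beta-2)})$ (this is why the hypothesis $\beta>2$ is needed). Quasi-uniformity gives $q \asymp h$, and then $N = O(h^{-d}) = O(\epsilon^{-d/(\beta-2)})$, which only contributes a $\log(1/\epsilon)$ factor and is absorbed into $\widetilde{O}$. Also $\delta/q \asymp h^{-\beta/\tau} = \epsilon^{-\beta/(\tau(\beta-2))}$.

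Next I would plug these into the structural bounds from Theorems \ref{th.03} and \ref{th.08}. Since $\delta/q \to \infty$, the sparsity satisfies $s \lesssim (\delta/q)^d$, the condition number of the preconditioned matrix satisfies $\kappa \lesssim (\delta/q)^d \cdot (\delta/q)^{2\tau-d} = (\delta/q)^{2\tau}$, and likewise ${\rm cond}(M) \lesssim s\,(\delta/q)^{2\tau-d} \lesssim (\delta/q)^{2\tau}$. In particular ${\rm cond}(M) = O(h^{-2\beta}) = O(\epsilon^{-2\beta/(\beta-2)})$, so the choice $\epsilon_L = O(\epsilon/{\rm cond}(M))$ from (\ref{eq.3.4.42}) only introduces a polynomial-in-$1/\epsilon$ factor inside a logarithm, i.e.\ $\log(1/\epsilon_L) = O(\log(1/\epsilon))$, again swept into $\widetilde{O}$.

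The main bookkeeping step, and the place where a small mistake would corrupt the exponent, is combining the three polynomial factors in (\ref{eq.3.3.36}):
\begin{equation*}
(\delta/q)^{2\tau-d}\cdot s^2 \cdot \kappa \;\lesssim\; (\delta/q)^{2\tau-d}\cdot (\delta/q)^{2d} \cdot (\delta/q)^{2\tau} \;=\; (\delta/q)^{4\tau+d}.
\end{equation*}
Using $\delta/q \asymp h^{-\beta/\tau}$ and $h = O(\epsilon^{1/(\beta-2)})$, this equals $h^{-\beta(4+d/\tau)} = \epsilon^{-\frac{\beta}{\beta-2}(4+d/\tau)}$, which is precisely the claimed runtime once the suppressed $\log$ and ${\rm poly}(\log N)$ factors are swallowed into $\widetilde{O}$.

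Finally I would state the conclusion: by the triangle inequality (\ref{eq.3.4.38}) the two halves of the error budget together give $\|\ket{\bm u}-\ket{\widetilde{\bm u}}\|\le \epsilon$, and the runtime estimate above completes the proof. The only subtle point worth checking explicitly is that $\delta = Ch^{1-\beta/\tau}$ indeed lies in $(0,1]$ for sufficiently small $h$ when $\beta<\tau$ (and the constant $C$ can always be shrunk to enforce this when $\beta\ge\tau$ by allowing $\delta/q$ large rather than $\delta$ large), so Theorems \ref{th.02}, \ref{th.03} and \ref{th.08} remain applicable throughout.
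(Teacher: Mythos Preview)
Your proposal is correct and follows essentially the same route as the paper: both substitute the bounds for $s$ and $\kappa$ from Theorem~\ref{th.03} (and for ${\rm cond}(M)$ from Theorem~\ref{th.08}) into the runtime (\ref{eq.3.3.36}), use quasi-uniformity and $\delta = Ch^{1-\beta/\tau}$ to reduce $(\delta/q)^{2\tau-d}s^2\kappa$ to $(\delta/h)^{4\tau+d}=h^{-\beta(4+d/\tau)}$, and then invoke $h^{\beta-2}=O(\epsilon)$ from (\ref{eq.3.4.40}) to get the stated exponent. Your closing remark about $\delta\in(0,1]$ is extra caution---the paper simply takes this as a standing hypothesis of the theorem rather than something to be verified.
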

\begin{proof}
Substitute the condition number $\kappa$ and the sparsity $s$ by their upper bounds in Theorem \ref{th.03}. And replace $N$, $\epsilon_L$ with the formulas (\ref{eq.3.4.37}) and (\ref{eq.3.4.42}). The overall runtime from (\ref{eq.3.3.36}) becomes
\begin{equation}\label{eq.3.4.44}
\begin{split}
& \widetilde{O}\left((\delta/q)^{2\tau-d}s^2\kappa\log(1/\epsilon_L){\rm poly}\log N\right) \\
= & \widetilde{O}\left(\left(\delta/q\right)^{4\tau-2d} (1+\delta/q)^{3d} \log\left({\rm cond}(M)/\epsilon \right) {\rm poly}\log\left(1/h^{d}\right)\right) \\
= & \widetilde{O}\left((\delta/h)^{4\tau+d} \log(1/\epsilon) \log(1/h) \right) \\
=& \widetilde{O}\left(h^{-\beta\left(4+\frac{d}{\tau}\right)}\right) =\widetilde{O}\left(\epsilon^{-\frac{\beta}{\beta-2}\left(4+\frac{d}{\tau}\right)}\right).
\end{split}
\end{equation}
The first equality applies the substitution of $\kappa$, $s$, $N$ and $\epsilon_L$. The second equality is built on the approximations $q\approx h$, $(1+\delta/q)^d \approx (\delta/h)^d$ due to the quasi-uniform data distribution, and the upper bound of ${\rm cond}(M)$ in Theorem \ref{th.08}. And the last two equality uses the assumption $\delta=C h^{1-\beta/\tau}$ with $\beta>2$ and the relationship $h^{\beta-2}=O(\epsilon)$ from (\ref{eq.3.4.40}).
\end{proof}

If we employ the best classical method, the conjugate gradient method, to solve the resulting linear system and then compute the solution of the Poisson equation on $\mathcal X$, the runtime is
\begin{equation}\label{eq.3.4.45}
\begin{split}
O\left(Ns\sqrt{\kappa}\log(1/\epsilon)\right)
& = \widetilde{O}\left(h^{-d} (1+\delta/q)^{1.5d} (\delta/q)^{\tau-0.5d}  \log(1/\epsilon)  \right) \\
& = \widetilde{O}\left( h^{-d} (\delta/h)^{\tau+d} \log(1/\epsilon) \right) \\
& = \widetilde{O}\left(h^{-\beta\left(1+\frac{d}{\tau}+\frac{d}{\beta}\right)} \right)
= \widetilde{O}\left(\epsilon^{-\frac{\beta}{\beta-2}\left(1+\frac{d}{\tau} + \frac{d}{\beta} \right)} \right)
\end{split}
\end{equation}
where the equality is deduced like that of Theorem \ref{th.main}.

Table \ref{table.2} compares the time complexity of the quantum and classical methods. By the runtime of quantum and classical methods in (\ref{eq.3.4.44}) and (\ref{eq.3.4.45}), we can see that when $d>3\beta$ the quantum algorithm can achieve a polynomial speedup. Since $\beta>2$ for convergence, the quantum advantage appears for high-dimensional problems with $d>6$.

\begin{table}[H]
    \centering
    \renewcommand\arraystretch{1.5}
    \caption{\label{table.2} Time complexity of the quantum and classical methods. $\delta=C h^{1-\beta/\tau}$, with $C>0$, $\beta>2$ constants and $\tau=d/2+k+1/2$. Q-advantage: the case where our quantum algorithm shows an advantage over the classical counterpart.}
    \begin{tabular}{l c c c}
        \toprule
        \textbf{Algorithm} & \textbf{Classical} & \textbf{Quantum}  &  \textbf{Q-advantage} \\
        \midrule
        \textbf{Complexity} & $\widetilde{O}\left(\epsilon^{-\frac{\beta}{\beta-2}\left(1+\frac{d}{\tau} + \frac{d}{\beta} \right)} \right)$ & $\widetilde{O}\left(\epsilon^{-\frac{\beta}{\beta-2}\left(4+\frac{d}{\tau}\right)}\right)$ & $d>3\beta$ \\
        \bottomrule
    \end{tabular}
\end{table}

To be more clear, take $\beta=3,4$ for examples. That is, the support radius $\delta$ takes $C h^{1-3/\tau}$ and $C h^{1-4/\tau}$ such that the RBF error in (\ref{eq.3.4.39}) scales as $h$ and $h^2$ respectively. Table \ref{table.3} lists the related runtime of the classical and quantum algorithms. The following observations can be made.
\begin{itemize}
    \item[$\bullet$] The quantum speedup attained with respect to the precision $\epsilon$ is only at most polynomial if the space dimension $d$ is fixed. Moreover, in a low dimension, the runtime of the quantum algorithm can actually be worse than that of the classical method.

    \item[$\bullet$] The runtime of the quantum algorithm does not grow exponentially in $d$.
    For high-dimensional problems, the quantum speedup can be very significant.

    \item[$\bullet$] As $\beta$ increases, the support radius $\delta=C h^{1-\beta/\tau}$ increases. For fixed mesh norm $h$, this results in better accuracy from (\ref{eq.3.4.39}), but worse sparsity. That is, a tradeoff principle exists.
    The quantum advantage appears in the case of higher space dimension, since the quantum advantage strongly relies on the sparsity of the matrix which can be seen from (\ref{eq.3.3.36}).
\end{itemize}

\begin{table}[H]
    \centering
    \renewcommand\arraystretch{1.5}
    \caption{\label{table.3} Time complexity of the quantum and classical methods when $\beta=3,4$.}
    \begin{tabular}{l l l c}
        \toprule
        \textbf{Algorithm} & \textbf{Classical} & \textbf{Quantum} &  \textbf{Q-advantage} \\
        \midrule
        \textbf{$\beta=3$} & $\widetilde{O}\left(\epsilon^{-3-\frac{3d}{\tau} - d} \right)$ & $\widetilde{O}\left(\epsilon^{-12-\frac{3d}{\tau}}\right)$ & $d>9$ \\
        \textbf{$\beta=4$} & $\widetilde{O}\left(\epsilon^{-2 - \frac{2d}{\tau} - \frac{d}{2}} \right)$ & $\widetilde{O}\left(\epsilon^{-8-\frac{2d}{\tau}}\right)$ & $d>12$ \\
        \bottomrule
    \end{tabular}
\end{table}

\section{Conclusion}
In this paper, we present a quantum RBF method to deal with the Poisson problem. We apply the symmetric collocation method based on CSRBFs to discretize the Poisson equation as a linear system. Using the QLSA based on filtering, we solve the preconditioned linear system. Furthermore, we employ the technique of block encoding to get a quantum state encoding the solution of the Poisson equation at the collocation points. Compared with the best classical method, the quantum algorithm can achieve a polynomial speedup for high-dimensional problems. The algorithm may have possible applications for high-dimensional Poisson problems with irregular geometry. The results can also be extended to general second order elliptic boundary value problems.

It is an interesting question of whether the quantum RBF method can achieve an exponential speedup. For future research, we may consider the problem of other types of PDEs using the RBF method. Another interesting direction is to consider the multiscale collocation method which may overcome the tradeoff principle.

\section*{Acknowledgements}
This work was supported by the National Key Research and Development
Program of China under Grant No.\ 2021YFA1000600, and the National
Natural Science Foundation of China under Grant No.\ 11571265. 

\begin{appendices}
\setcounter{equation}{0}
\renewcommand{\theequation}{A\arabic{equation}}

\section{Proof of Lemma 2}
\renewcommand{\thelemma}{2}
\begin{lemma}\label{lemma.A.2}
Let $\Phi(\bm x)=\phi(r)$ with support radius $1$ and $r=||\bm x||$ for $\bm x=(x_1,\dots,x_d)\in\mathbb{R}^d$. Suppose $\Phi\in C^4(\mathbb{R}^d)$. Denote by $\Delta$ the Laplace operator $\Delta=\sum_{j=1}^{d}\partial^2/\partial x_j^2$.
Then we have
\begin{equation}\label{eq.A.A1}
    \Delta\Phi(\bm x) = \phi''(r)+\frac{d-1}{r}\phi'(r),
\end{equation}
and
\begin{equation}\label{eq.A.A2}
    \Delta^2\Phi(\bm x) = \phi^{(4)}(r)+\frac{2(d-1)}{r}\phi'''(r)+\frac{(d-1)(d-3)}{r^2}\phi''(r)-\frac{(d-1)(d-3)}{r^3}\phi'(r)
\end{equation}
provided that $r\neq0$ or when $r=0$ these two formulas are well defined,
where $\Delta^2$ represents the double Laplacian.
\end{lemma}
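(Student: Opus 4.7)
The plan is to treat the two formulas separately, with the second reduced to the first. For equation (\ref{eq.A.A1}), I would start from the chain rule identity $\partial r/\partial x_j = x_j/r$ (valid for $r\neq 0$) and compute
\[
\frac{\partial \Phi}{\partial x_j} \;=\; \phi'(r)\,\frac{x_j}{r},\qquad
\frac{\partial^2 \Phi}{\partial x_j^2} \;=\; \phi''(r)\,\frac{x_j^2}{r^2} + \phi'(r)\left(\frac{1}{r}-\frac{x_j^2}{r^3}\right).
\]
Summing over $j=1,\dots,d$ and using $\sum_j x_j^2 = r^2$ collapses the expression to $\phi''(r) + \tfrac{d-1}{r}\phi'(r)$, which is the claimed $F_1(r)$. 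This step is essentially standard and raises no obstacle away from the origin.

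For equation (\ref{eq.A.A2}), the key observation is that $\Delta\Phi = F_1(r)$ is itself radial, so I can apply the formula (\ref{eq.A.A1}) a second time with $\phi$ replaced by $F_1$:
\[
\Delta^2\Phi \;=\; F_1''(r) + \frac{d-1}{r} F_1'(r).
\]
I would then differentiate $F_1(r)=\phi''(r)+\tfrac{d-1}{r}\phi'(r)$ twice, using the product rule on $\phi'(r)/r$, and collect the coefficients of $\phi^{(4)}$, $\phi'''$, $\phi''$, $\phi'$. The coefficients of $\phi''/r^2$ and $\phi'/r^3$ combine as $-2(d-1)+(d-1)^2=(d-1)(d-3)$ and the symmetric $2(d-1)-(d-1)^2=-(d-1)(d-3)$, giving exactly the stated $F_2(r)$. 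This is a direct, if slightly tedious, calculation.

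The only nontrivial point is well-definedness at $r=0$, where the formulas appear to have $1/r$, $1/r^2$, $1/r^3$ singularities. Here I would invoke the smoothness assumption $\Phi\in C^4(\mathbb R^d)$: for a radial $C^4$ function, Taylor expanding at the origin shows that the odd-order derivatives of $\phi$ at $0$ must vanish, so $\phi'(0)=\phi'''(0)=0$ and $\phi'(r)=\phi''(0)r+O(r^3)$, $\phi'''(r)=\phi^{(4)}(0)r+O(r^3)$. Substituting these expansions into $F_1$ and $F_2$ shows that each apparently singular term has a removable singularity, and the limits as $r\to 0$ exist. Equivalently, one can argue that $\Delta\Phi$ and $\Delta^2\Phi$ are continuous on all of $\mathbb R^d$ because $\Phi\in C^4(\mathbb R^d)$, so the expressions derived for $r\neq 0$ extend continuously to $r=0$. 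The compact support radius of $F_1$ and $F_2$ is inherited immediately from that of $\phi$, since each term in $F_1$ and $F_2$ involves only $\phi$ and its derivatives evaluated at $r$, all of which vanish for $r>1$. The main (and really only) obstacle is bookkeeping of the coefficients in the second differentiation; the rest is routine.
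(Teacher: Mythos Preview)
Your proposal is correct. For (\ref{eq.A.A1}) you do exactly what the paper does: chain rule, compute $\partial^2\Phi/\partial x_j^2$, sum over $j$. For (\ref{eq.A.A2}) you take a slightly cleaner route than the paper: you observe that $\Delta\Phi=F_1(r)$ is again radial and simply reapply (\ref{eq.A.A1}) with $\phi$ replaced by $F_1$, then differentiate $F_1$ as a function of one variable. The paper instead re-derives the partial derivatives $\partial^2/\partial x_j^2$ of the expression $\phi''(r)+\tfrac{d-1}{r}\phi'(r)$ from scratch via the chain rule and sums again. Your shortcut avoids that repetition and makes the coefficient bookkeeping a bit more transparent, though both computations are of course equivalent. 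Your treatment of the $r=0$ case (noting that odd-order derivatives of $\phi$ vanish at the origin for a $C^4$ radial function, hence the singularities are removable) actually goes beyond the paper's proof, which simply restates the proviso from the lemma without elaboration; the paper defers that discussion to Remark~\ref{rem.01} for the specific case of Wendland's functions.
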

\begin{proof}
Firstly, we prove the formula (\ref{eq.A.A1}) for second-order derivatives of the radial function $\Phi$ of $d$ variables. For the $j$th variable $x_j$ ($j=1,\dots,d$) of $\bm x$, the chain rule implies
\begin{equation*}
    \begin{aligned}
        \frac{\partial}{\partial x_j}\phi(r) &=\frac{d\phi(r)}{dr}\frac{\partial }{\partial x_j}r(x_1,\dots,x_d)\\
        &=\frac{d\phi(r)}{dr}\frac{x_j}{\sqrt{x_1^2+\cdots+x_d^2}}
        = \frac{x_j}{r}\phi'(r)
    \end{aligned}
\end{equation*}
since $r=||\bm x||=\sqrt{x_1^2+\cdots+x_d^2}$. The generic second-order derivatives are given by
\begin{equation*}
    \frac{\partial}{\partial x_j^2}\phi(r)
    =\frac{\partial}{\partial x_j}\left(\frac{x_j}{r}\phi'(r)\right)
    = \frac{x_j^2}{r^2}\phi''(r)+\frac{r^2-x_j^2}{r^3}\phi'(r)
\end{equation*}
for $j=1,\dots,d$.
Hence, the Laplacian is of the form
\begin{equation*}
    \begin{aligned}
        \Delta\Phi(\bm x) &= \Delta\phi(||\bm x||) =\left(\frac{\partial^2}{\partial x_1^2}+\dots+\frac{\partial^2}{\partial x_d^2}\right)\phi(r)\\
        &= \phi''(r)+\frac{d-1}{r}\phi'(r),
    \end{aligned}
\end{equation*}
which completes the proof of (\ref{eq.A.A1}).

We then prove the formula (\ref{eq.A.A2}). Applying the Laplacian to the formula (\ref{eq.A.A1}) yields
\begin{equation*}
    \begin{aligned}
        \Delta^2\Phi(\bm x) & =\left(\frac{\partial^2}{\partial x_1^2}+\dots+\frac{\partial^2}{\partial x_d^2}\right)\left(\phi''(r)+\frac{d-1}{r}\phi'(r)\right).\\
    \end{aligned}
\end{equation*}
The chain rule implies
\begin{equation*}
    \begin{aligned}
        &\frac{\partial}{\partial x_j}\left(\phi''(r)+\frac{d-1}{r}\phi'(r)\right)\\
        &= \frac{x_j}{r}\phi'''(r) + \frac{\partial}{\partial x_j}\left(\frac{d-1}{r}\phi'(r)\right) \\
        & = \frac{x_j}{r}\phi'''(r) + \frac{(d-1)x_j}{r^2}\phi''(r) - \frac{(d-1)x_j}{r^3}\phi'(r),
    \end{aligned}
\end{equation*}
and
\begin{equation*}
    \begin{aligned}
        & \frac{\partial^2}{\partial x_j^2}\left(\phi''(r)+\frac{d-1}{r}\phi'(r)\right)\\
        & =\frac{\partial}{\partial x_j}\left(\frac{x_j}{r}\phi'''(r) + \frac{(d-1)x_j}{r^2}\phi''(r) - \frac{(d-1)x_j}{r^3}\phi'(r)\right)\\
        &= \frac{\partial}{\partial x_j}\left(\frac{x_j}{r}\phi'''(r)\right) + \frac{\partial}{\partial x_j}\left(\frac{(d-1)x_j}{r^2}\phi''(r)\right) - \frac{\partial}{\partial x_j}\left(\frac{(d-1)x_j}{r^3}\phi'(r)\right) \\
        & = \frac{x_j^2}{r^2}\phi^{(4)}(r) + \frac{r^2+(d-2)x_j^2}{r^3}\phi'''(r)+
        \frac{(d-1)(r^2-3x_j^2)}{r^4}\phi''(r) -\frac{(d-1)(r^2-3x_j^2)}{r^5}\phi'(r)
    \end{aligned}
\end{equation*}
for $j=1,\dots,d$. Thus, we have
\begin{equation*}
    \begin{aligned}
        \Delta^2\Phi(\bm x) & =\left(\frac{\partial^2}{\partial x_1^2}+\dots+\frac{\partial^2}{\partial x_d^2}\right)\left(\phi''(r)+\frac{d-1}{r}\phi'(r)\right)\\
        & = \phi^{(4)}(r)+\frac{2(d-1)}{r}\phi'''(r)+\frac{(d-1)(d-3)}{r^2}\phi''(r)-\frac{(d-1)(d-3)}{r^3}\phi'(r),
    \end{aligned}
\end{equation*}
which finishes the proof of (\ref{eq.A.A2}).
\end{proof}

\section{Proof of Theorem 8}
We demonstrate Theorem $\ref{th.08}$ based on \cite[Theorem 2.1]{duan2013stability}.
Take the substitution $\bm z = \bm x/ \delta$ and let the Laplace operator act on $\bm z$, the matrix $M$ in (\ref{eq.3.3.33}) can be further expressed in the abbreviated notation as
\begin{equation}\label{eq.B.A3}
    M=\delta^{-d}[-\Delta\Phi(\bm z_i-\bm z_j),\Phi(\bm z_i-\bm z_j)],
\end{equation}
and the separation distance in terms of $\bm z$ is $q/\delta$. It is easy to derive the following theorem.

\renewcommand{\thetheorem}{8}
\begin{theorem}
    Let $q\in (0,1]$ be the separation distance of $\mathcal X$ defined in (\ref{eq.2.2}). For the $s$-sparse matrix $M$ defined in (\ref{eq.3.3.33}), we have $||M\bm\xi||\geq C q^{-d} (q/\delta)^{2\tau}||\bm\xi||$ for any $\bm\xi\in\mathbb{R}^{N}$, and the condition number ${\rm cond}(M)\leq C s (\delta/q)^{2\tau-d}$, where $C$ is a positive constant independent of the data set $\mathcal X$.
\end{theorem}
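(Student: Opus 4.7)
The plan is to combine the rescaling already carried out in the appendix with the stability estimate of [Theorem 2.1]{duan2013stability}, and then to derive the upper bound on $\|M\|$ by elementary sparse-matrix arguments.

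First I would observe that after the substitution $\bm z=\bm x/\delta$ the matrix $\tilde M$ defined by $M=\delta^{-d}\tilde M$ (see (\ref{eq.B.A3})) has exactly the structure of the unsymmetric Kansa-type collocation matrix associated with the unscaled radial function $\Phi$ on the point set $\{\bm z_i\}=\{\bm x_i/\delta\}$, whose separation distance is $q/\delta$. Since the native space of $\Phi$ is norm-equivalent to $H^\tau(\mathbb{R}^d)$ by Lemma \ref{lemma.01}, [Theorem 2.1]{duan2013stability} applies and delivers a stability estimate of the form
\begin{equation*}
\|\tilde M \bm\xi\| \;\geq\; C\,(q/\delta)^{2\tau-d}\,\|\bm\xi\|, \qquad \bm\xi\in\mathbb{R}^N,
\end{equation*}
with a constant $C>0$ depending only on $\Phi$ (not on $\mathcal X$ or on $\delta$). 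Multiplying through by $\delta^{-d}$ and rewriting $\delta^{-d}(q/\delta)^{2\tau-d}=q^{-d}(q/\delta)^{2\tau}$ yields the first claim
\begin{equation*}
\|M\bm\xi\| \;=\; \delta^{-d}\|\tilde M\bm\xi\| \;\geq\; C\,q^{-d}(q/\delta)^{2\tau}\,\|\bm\xi\|.
\end{equation*}

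Next I would bound $\|M\|$ from above. By Theorem \ref{th.01}(I) and Remark \ref{rem.01}, the positive-definite functions $\Phi$ and $-\Delta\Phi$ attain their maxima at the origin, so $|\Phi(\bm y)|\leq \Phi(\bm 0)$ and $|-\Delta\Phi(\bm y)|\leq -\Delta\Phi(\bm 0)$ uniformly in $\bm y$. Consequently every entry of $M$ in the representation (\ref{eq.3.3.33}) satisfies $|M_{ij}|\leq C\delta^{-d}$, where $C=\max\{\Phi(\bm 0),-\Delta\Phi(\bm 0)\}$ depends only on the chosen Wendland function. Combined with the $s$-sparsity of $M$ (at most $s$ nonzero entries per row and per column), the standard inequality $\|M\|_2\leq\sqrt{\|M\|_1\,\|M\|_\infty}$ gives
\begin{equation*}
\|M\|\;\leq\; s\cdot\max_{i,j}|M_{ij}|\;\leq\; C\,s\,\delta^{-d}.
\end{equation*}

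Dividing the upper bound on $\sigma_{\max}(M)=\|M\|$ by the lower bound on $\sigma_{\min}(M)$ yields
\begin{equation*}
{\rm cond}(M)\;\leq\;\frac{C\,s\,\delta^{-d}}{C\,q^{-d}(q/\delta)^{2\tau}}\;=\;C\,s\,(\delta/q)^{2\tau-d},
\end{equation*}
which is the second claim. The main obstacle I expect is the verification that [Theorem 2.1]{duan2013stability} can be invoked essentially verbatim in the rescaled $\bm z$-coordinates: one must check that the hypotheses on the smoothness parameter $\tau$, the quasi-uniformity of $\{\bm z_i\}$, and the norm equivalence of the native space with $H^\tau(\mathbb{R}^d)$ all survive the dilation. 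The factor $q^{-d}$ that appears in the final bound (in place of a $\delta^{-d}$) is precisely the Jacobian/volume factor produced by this dilation, and keeping track of it cleanly is the most error-prone bookkeeping step; everything else reduces to the elementary sparse-matrix estimate and the pointwise bounds from Theorem \ref{th.01}.
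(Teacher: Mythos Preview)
Your proposal is correct and follows essentially the same route as the paper's own proof: rescale to the $\bm z$-coordinates so that the separation distance becomes $q/\delta$, invoke \cite[Theorem 2.1]{duan2013stability} to obtain the lower bound $(q/\delta)^{2\tau-d}$ on $\tilde M$, multiply by the prefactor $\delta^{-d}$, and combine with the elementary bound $\|M\|\le\sqrt{\|M\|_1\|M\|_\infty}\le Cs\delta^{-d}$ coming from $s$-sparsity and the pointwise maximum $\max\{\Phi(\bm 0),-\Delta\Phi(\bm 0)\}$. Your bookkeeping of the exponents and your identification of the max-entry constant via Theorem~\ref{th.01}(I) and Remark~\ref{rem.01} match the paper exactly.
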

\begin{proof}
    Since $\Phi$ takes the form of Wendland's functions, the decay condition (\ref{eq.2.1.6}) is satisfied. By \cite[Theorem 2.1]{duan2013stability} and (\ref{eq.B.A3}), we have
    \begin{equation*}
        ||M\bm\xi||\geq C \delta^{-d} (q/\delta)^{-d} (q/\delta)^{2\tau}||\bm\xi|| = C q^{-d} (q/\delta)^{2\tau}||\bm\xi||.
    \end{equation*}
    That is, $\sqrt{\lambda_{\min}(M^{T}M)} \geq C q^{-d} (q/\delta)^{2\tau}$. Besides since $M$ is $s$-sparse with respect to the row and column, and the largest value in $M$ is $C \delta^{-d}$  with $C=\max\{-\Delta\Phi(\bm 0), \Phi(\bm 0)\}$ a positive constant, we have
    $||M|| \leq \sqrt{||M||_1 ||M||_\infty} \leq C s \delta^{-d}$. Thus, we can derive the upper bound of the condition number of $M$.
\end{proof}
\end{appendices}

\end{document}